\newtheorem{definition}{Definition}[section]
\newtheorem{lemma}[definition]{Lemma}
\newtheorem{proposition}[definition]{Proposition}
\newtheorem{theorem}[definition]{Theorem}
\numberwithin{equation}{section}
\def\eps{\varepsilon}
\def\rd{\mathrm{d}}
\def\ri{\mathrm{i}}
\def\UN{\mathcal{U}_N\left(t;0\right)}
\def\tr{\mathrm{tr}}
\def\Re{\mathrm{Re}}
\def\bR{\mathbb{R}}
\def\bN{\mathbb{N}}
\def\cF{\mathcal{F}}
\def\cE{\mathcal{E}}
\def\cH{\mathcal{H}}
\def\cN{\mathcal{N}}
\def\cU{\mathcal{U}}
\def\ph{\varphi}
\title{Mean field evolution of fermions \\ with Coulomb interaction}
\author{Marcello Porta, Simone Rademacher, Chiara Saffirio   
\\ and Benjamin Schlein \\
\\
Institute of Mathematics, University of Zurich\\
Winterthurerstrasse 190, 8057 Zurich, Switzerland}
\begin{document}

\maketitle

\begin{abstract}
We study the many body Schr\"odinger evolution of weakly coupled fermions interacting through a Coulomb potential. 
We are interested in a joint mean field and semiclassical scaling, that emerges naturally for initially confined particles. For initial data describing approximate Slater determinants, we prove convergence of the many-body evolution towards Hartree-Fock dynamics. Our result holds under a condition on the solution of the Hartree-Fock equation, that we can only show in a very special situation (translation invariant data, whose Hartree-Fock evolution is trivial), but that we expect to hold more generally.
\end{abstract}

\section{Introduction}

The evolution of a system of $N$ fermions in the mean field regime is described by the Schr\"odinger equation
\begin{equation}\label{eq:schr0} i \eps \partial_t \psi_{N,t} = \left[ \sum_{j=1}^N \left[ -\eps^2 \Delta_{x_j} + V_\text{ext} (x_j) \right] + \frac{1}{N} \sum_{i<j}^N V(x_i - x_j) \right] \psi_{N,t} \, ,
\end{equation}
in the limit $N \to \infty$. Here $\eps = N^{-1/3}$ and, according to fermionic statistics, $\psi_{N,t} \in L^2_a (\bR^{3N})$, the subspace of $L^2 (\bR^{3N})$ consisting of wave functions antisymmetric with respect to permutations of the $N$ particles. 

%\[ \psi_N (x_{\pi 1}, \dots , x_{\pi N}) = \sigma_\pi \psi_N %(x_1 , \dots , x_N)  \]
%for all permutations $\pi \in S_N$ (here $\sigma_\pi = \pm %1$ is the sign of $\pi$). 

The Schr\"odinger equation (\ref{eq:schr0}) is relevant for initial $N$ particle wave functions $\psi_{N,0} \in L^2_a (\bR^{3N})$ localized in a volume of order one; in this case, the factor $\eps^2$ in front of the kinetic energy guarantees that both terms in the Hamiltonian are of order 
$N$. We conclude that, for fermionic systems, the mean field regime is linked with a semiclassical limit, with $\eps = N^{-1/3}$ playing the role of Planck's constant (notice, however, that in other situations, different scalings may be of interest; see, in particular, \cite{BGGM,FK,PP,BBPPT}).

Physically, it makes sense to consider initial data approximating equilibria of confined systems. At zero temperature, this leads to the study of the 
mean field dynamics of approximate Slater determinants. In \cite{BPS}, it has been proven that this evolution can be described through the Hartree-Fock equation, for regular interaction (the same conclusion was already reached in \cite{EESY}, for analytic potentials and short times). At positive temperature, convergence towards Hartree-Fock dynamics for mixed quasi free initial data has been later established in \cite{BJPSS}. 

Let us focus on the zero temperature case and explain the results of \cite{BPS} in more details. Let $\omega_N$ be a sequence of orthogonal projections on $L^2 (\bR^3)$ with $\tr \, \omega_N = N$ and such that 
\begin{equation}\label{eq:comm} \tr \, |[\omega_N , x]| \leq C N \eps , \qquad \text{and } \quad \tr \, |[\omega_N, \eps \nabla ] | \leq C N \eps \, . 
\end{equation}
The projections $\omega_N$ are the one-particle reduced densities of $N$-particle Slater determinants. We consider 
the time evolution of initial  fermionic wave functions $\psi_N \in L^2_a (\bR^{3N})$ with one-particle reduced density $\gamma^{(1)}_N$ close in the trace norm topology to $\omega_N$.  Denoting by $\psi_{N,t}$ the solution of the Schr\"odinger equation (\ref{eq:schr0}) with initial data $\psi_N$ and by $\gamma_{N,t}^{(1)}$ the corresponding one-particle reduced density, it is shown in \cite{BPS} that $\gamma_{N,t}^{(1)}$ is close (in the Hilbert-Schmidt and in the trace class topology) to the solution of the Hartree-Fock equation 
\begin{equation}\label{eq:HF0} i\eps \partial_t \omega_{N,t} = \left[ -\eps^2 \Delta + (V* \rho_t) - X_t , \omega_{N,t} \right] 
\end{equation}
with initial data $\omega_{N,0} = \omega_N$. Here $\rho_t (x) = N^{-1} \omega_{N,t} (x;x)$ and the exchange operator $X_t$ is defined by the integral kernel $X_t (x;y) = N^{-1} V (x-y) \omega_{N,t} (x;y)$ (strictly speaking, in \cite{BPS} the convergence towards the Hartree-Fock equation has only been shown for $V_\text{ext} (x) = 0$, but it is easy to extend the result to non-vanishing smooth external fields). 

In other words, the time-evolution of initial data close to a Slater determinant remains close to a Slater determinant evolved with respect to the Hartree-Fock equation (\ref{eq:HF0}). This holds provided the reduced density 
$\omega_N$ of the initial Slater determinant satisfies the commutator bounds (\ref{eq:comm}). These estimates play a crucial role in \cite{BPS} to obtain convergence up to the correct time scale. They reflect the semiclassical  structure of $\omega_N$, i.e. the fact that the integral kernel $\omega_N (x;y)$ varies on the short scale of order $\eps$ in the $x-y$ direction, while it varies on scales of order one in the $x+y$ direction. This structure is expected to arise in Slater determinants approximating equilibrium states. 

Notice that the Hartree-Fock equation (\ref{eq:HF0}) still depends on $N$ (recall that $\eps = N^{-1/3}$). As $N \to \infty$, one expects the Wigner transform 
\[ W_{N,t} (x,v) = \int dy \, \omega_{N,t} \left( x + \frac{\eps y}{2} ;x- \frac{\eps y}{2} \right) e^{i v \cdot y} \]
of the solution of (\ref{eq:HF0}) to converge towards a probability density $W_{\infty,t}$ on phase space, solving the classical Vlasov equation
\[ \partial_t W_{\infty,t} + v \cdot \nabla_x W_{\infty,t} + \nabla (V * \rho_t) \cdot \nabla_v W_{\infty,t} = 0 \]
Convergence of the Hartree-Fock evolution towards the Vlasov dynamics has been established in several works, see  \cite{LP,MM,APPP,AKN}, but only recently, in \cite{BPSS}, some results have been obtained for the situation we consider here, where $\omega_{N,t}$ is a projection. Remark also that direct convergence from the many-body quantum evolution to the Vlasov dynamics has been shown in \cite{NS} for analytic potentials and later in \cite{Sp} for $C^2$-interactions.  

The convergence towards the Hartree-Fock equation has been established in \cite{BPS} for regular interaction potentials satisfying 
\begin{equation}\label{eq:whV} \int dp\, |\widehat{V} (p)| (1+p^2)   < \infty\,. 
\end{equation}
This assumption excludes the case of a Coulomb interaction $V(x) = 1/|x|$. The Schr\"odinger equation (\ref{eq:schr0}) for a Coulomb potential is very interesting from the point of view of physics. It arises naturally when considering the dynamics of large atoms and molecules. In fact the Hamilton operator for an electrically neutral atom with $N$ electrons and a nucleus fixed at the origin is given by 
\begin{equation}\label{eq:atom-ham0} H^\text{atom}_N = \sum_{j=1}^N \left[ -\Delta_{x_j} - \frac{N}{|x_j|} \right] + \sum_{i<j}^N \frac{1}{|x_i -x_j|} 
\end{equation}
and acts on the Hilbert space $L^2_a (\bR^{3N})$ of the $N$ electrons. Thomas-Fermi theory suggests that electrons are localized at distances of order $N^{-1/3}$ from the nucleus (see, for example, the review article \cite{L}). It is therefore convenient to introduce new variables $X_j = N^{1/3} x_j$. Expressed in terms of the new variables, the atomic Hamiltonian (\ref{eq:atom-ham0}) takes the form 
\begin{equation}\label{eq:Hatom-N} \begin{split} H^\text{atom}_N &= \sum_{j=1}^N \left[ -N^{2/3} \Delta_{X_j}  - \frac{N^{4/3}}{|X_j|} \right] + N^{1/3} \sum_{i<j}^N \frac{1}{|X_i - X_j|} \\ &= N^{4/3} \left\{ \sum_{j=1}^N \left[ -\eps^2 \Delta_{X_j} - \frac{1}{|X_j|} \right] + \frac{1}{N} \sum_{i<j} \frac{1}{|X_i -X_j|} \right\} \end{split} \end{equation}
with $\eps = N^{-1/3}$. Choosing the correct time scale, we arrive exactly at the Schr\"odinger equation (\ref{eq:schr0}) with $V_\text{ext} (x) = - 1/|x|$ and interaction $V(x) = 1/|x|$. Remark that Hartree-Fock theory is known to provide a good approximation to the ground state energy of (\ref{eq:Hatom-N}). While the classical Thomas-Fermi theory only captures the leading order of the ground state energy, which is of order $N^{7/3}$ (see \cite{LSi,L}), Hartree-Fock theory was proven in \cite{B,GS} to provide a much more accurate approximation, with an error of order smaller than $N^{5/3}$.

The goal of our paper is to extend the convergence of the many-body dynamics towards the time-dependent Hartree-Fock equation to the case of a Coulomb interaction. Our results are still not completely satisfactory, in the sense that they make use of a property of the solution of the time-dependent Hartree-Fock equation (\ref{eq:HF0}) which we can only show to hold true for very special choices of the initial data. 
Nevertheless, we believe our results to be of some interest, since they reduce the problem of the derivation of the Hartree-Fock equation for Coulomb systems from the analysis of the many-body Schr\"odinger equation (\ref{eq:schr0}) to the study of the properties of the simpler Hartree-Fock equation (\ref{eq:HF0}). Notice that the time evolution of fermions interacting through a Coulomb potential has been recently considered in \cite{BBPPT}. In this work, however, a different scaling was considered, with the $N$ particles occupying a large volume of order $N$. After rescaling lengths, this choice leads to the Schr\"odinger equation (\ref{eq:schr0}), with short times $t$ of order $\eps = N^{-1/3}$. 

Let us now illustrate our results in a precise form. For a wave function $\psi_N \in L^2_a (\bR^{3N})$ we define the one-particle reduced density $\gamma^{(1)}_N$ as the non-negative trace class operator with integral kernel given by  
\begin{equation}\label{eq:gamma1}\gamma^{(1)}_N (x;y) = N \int dx_2 \dots dx_N\,\psi_N (x, x_2, \dots , x_N) \overline{\psi}_N (y, x_2, \dots , x_N)\,.
\end{equation}
Notice here that we use the standard normalization $\tr \, \gamma^{(1)}_N = N$. A simple computation shows that the reduced density of the Slater determinant 
\[ \psi_\text{slater} (x_1, \dots , x_N) = \frac{1}{\sqrt{N!}} \det \left( f_i (x_j) \right)_{i,j \leq N} \, , \]
where $\{ f_j \}_{j=1}^N$ is an orthonormal system on $L^2 (\bR^3)$, is given by the orthogonal projection 
\begin{equation}\label{eq:omega0} \omega_N = \sum_{j=1}^N |f_j \rangle \langle f_j| \end{equation}
on the $N$ dimensional linear space spanned by the orbitals $\{ f_j \}_{j=1}^N$. 

We consider a sequence of initial data $\psi_N \in L^2_a (\bR^{3N})$, which we assume close to a Slater determinant in the sense that the one-particle reduced density $\gamma^{(1)}_N$  associated with $\psi_N$ satisfies $\| \gamma^{(1)}_N - \omega_N \|_\text{tr} \leq C$, uniformly in $N$, for a sequence $\omega_N$ of orthogonal projections of rank $N$ ($\omega_N$ is the one-particle reduced density of a Slater determinant). 

Under this condition, we consider the evolution 
$\psi_{N,t} = e^{-i H_N t/\eps} \psi_N$ of the initial data $\psi_N$, generated by the Coulombic Hamiltonian 
\begin{equation}\label{eq:ham-mf} H_N = \sum_{j=1}^N -\eps^2 \Delta_{x_j} + \frac{1}{N} \sum_{i<j}^N \frac{1}{|x_i - x_j|}\,.
\end{equation} 
To simplify the notation we assumed here that the external potential vanishes (but it is easy to extend our results to the case $V_\text{ext} \not = 0$). 

We compare $\psi_{N,t}$ with the Slater determinant with reduced density $\omega_{N,t}$ given by the solution of the time-dependent Hartree-Fock equation 
\begin{equation}\label{eq:HF} i\eps \partial_t \omega_{N,t} = \left[ -\eps^2 \Delta + \frac{1}{|.|} * \rho_t -X_t , \omega_{N,t} \right] 
\end{equation}
with the position-space density $\rho_t (x) = N^{-1} \omega_{N,t} (x;x)$ and where $X_t$ is the exchange operator, with the integral kernel $X_t (x;y) = N^{-1} |x-y|^{-1}$. 

As in \cite{BPS}, a crucial role in our analysis is played by the operator $|[x,\omega_{N,t}]|$. Let us define its density 
\[ \rho_{|[x,\omega_{N,t}]|} (x) = |[x,\omega_{N,t}]| (x;x)\,. \]
An important ingredient in \cite{BPS} was the estimate 
\begin{equation}\label{eq:L1bd} \| \rho_{|[x,\omega_{N,t}]|} \|_1 = \tr \, |[ x,\omega_{N,t}]| \leq C e^{K|t|} N \eps  \, , 
\end{equation}
valid for all $t \in \bR$. For interaction potentials satisfying (\ref{eq:whV}), (\ref{eq:L1bd}) was proven in \cite{BPS} propagating the commutator bounds (\ref{eq:comm}) along the solution of the Hartree-Fock equation. Here, to deal with the Coulomb singularity of the interaction, we need additional information on the operator $|[x,\omega_{N,t}]|$; in particular, we need a bound (again of the order $N\eps$) on the $L^p$ norm of $\rho_{|[x,\omega_{N,t}]|}$, for a $p > 5$. Unfortunately, we do not know what assumptions on the initial data $\omega_N$ imply the validity of these bounds for the solution of the Hartree-Fock equation (\ref{eq:L1bd}). Our main result is therefore a conditional statement; it gives convergence of the many-body evolution with Coulomb interaction towards the Hartree-Fock equation on the time interval $[0;T]$ provided the $L^1$ and the $L^p$ norm of $\rho_{|[x,\omega_{N,t}]|}$ are of order $N\eps$, uniformly in $t \in [0;T]$ (for a $p > 5$). 
\begin{theorem}\label{thm:main}
Let $\omega_N$ be a sequence of orthogonal projections on $L^2 (\bR^3)$, with $\tr \, \omega_N = N$ and such that $\tr\, (-\eps^2 \Delta) \, \omega_N \leq C N$, for a constant $C >0$ independent of $N$. Let $\omega_{N,t}$ denote the solution of the Hartree-Fock equation (\ref{eq:HF}) with initial data $\omega_{N,0} = \omega_N$.  We assume that there exists a time $T > 0$, a $p > 5$ and a constant $C > 0$ such that 
\begin{equation}\label{eq:ass-main} \sup_{t \in [0;T]}  \, \sum_{i=1}^3 \left[ \| \rho_{|[x_i,\omega_{N,t}]|} \|_1 + \|  \rho_{|[x_i,\omega_{N,t}]|} \|_p \right] \leq C N \eps \,.  \end{equation}

Let $\psi_N \in L^2_a (\bR^{3N})$ be such that its one-particle reduced density matrix $\gamma_{N}^{(1)}$ satisfies
\begin{equation}\label{eq:conden} \tr \, \left| \gamma^{(1)}_N - \omega_N \right| \leq C N^\alpha \end{equation}
for a constant $C > 0$ and an exponent $0 \leq \alpha < 1$.

Consider the evolution $\psi_{N,t}= e^{-iH_N t/\eps} \psi_N$, with the Hamilton operator (\ref{eq:ham-mf}) and let  
$\gamma^{(1)}_{N,t}$ be the corresponding one-particle reduced density. Then for every $\delta > 0$ there exists $C >0$ such that 
\begin{equation}\label{eq:HS-bd} 
\sup_{t \in [0;T]} \, \left\| \gamma_{N,t}^{(1)} - \omega_{N,t} \right\|_\text{HS} \leq C \left[ N^{\alpha/2}+ N^{5/12 + \delta} \right] 
\end{equation}
and 
\begin{equation}\label{eq:tr-bd} 
\sup_{t \in [0;T]} \, \tr \left| \gamma_{N,t}^{(1)} - \omega_{N,t} \right| \leq C \left[ N^\alpha + N^{11/12 + \delta}\right] . 
\end{equation}  
\end{theorem}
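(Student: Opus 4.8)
The plan is to follow the strategy developed in \cite{BPS}, controlling the evolution of the difference between the many-body dynamics and the Hartree-Fock dynamics via a Gr\"onwall argument on a suitable functional, but with the key new input being the $L^p$ bound in assumption (\ref{eq:ass-main}) that allows us to absorb the Coulomb singularity. More precisely, I would work in the fermionic Fock space and represent the many-body state through a Bogoliubov-type transformation $R_{\omega_{N,t}}$ associated with the Hartree-Fock solution $\omega_{N,t}$. The idea is that $\psi_{N,t}$, suitably lifted to Fock space, can be written as $R_{\omega_{N,t}} \xi_{N,t}$, and the quantity controlling the convergence is the expected number of particles $\langle \xi_{N,t}, \mathcal{N} \xi_{N,t} \rangle$: one shows that $\| \gamma_{N,t}^{(1)} - \omega_{N,t} \|_\text{HS}^2$ is bounded by (a constant times) $\langle \xi_{N,t}, \mathcal{N} \xi_{N,t} \rangle$, and then one estimates the growth in time of the latter. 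The initial condition (\ref{eq:conden}) gives $\langle \xi_{N,0}, \mathcal{N} \xi_{N,0} \rangle \lesssim N^\alpha$ (plus lower-order terms arising from the fact that $\psi_N$ is only approximately a Slater determinant, not exactly the vacuum-sector state $R_{\omega_N} \Omega$).

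The core of the argument is a differential inequality of the schematic form
\begin{equation}
\frac{d}{dt} \langle \xi_{N,t}, \mathcal{N} \xi_{N,t} \rangle \leq C \left( \langle \xi_{N,t}, \mathcal{N} \xi_{N,t} \rangle + E_N(t) \right)\,,
\end{equation}
where the error term $E_N(t)$ collects the contributions that are not already of order $\mathcal{N}$ after commuting $\mathcal{N}$ through the generator of the fluctuation dynamics. In the regular case (\ref{eq:whV}) one finds $E_N(t) \lesssim N\eps$ using the commutator bounds (\ref{eq:comm}); here the Coulomb singularity forces a more delicate analysis of the terms containing two or four creation/annihilation operators paired against the interaction kernel $|x-y|^{-1}/N$. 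The strategy is to split $|x-y|^{-1} = |x-y|^{-1} \mathbf{1}(|x-y| \leq r) + |x-y|^{-1} \mathbf{1}(|x-y| > r)$; the long-range part is bounded like a regular potential, while the short-range part is handled using a Hardy-type inequality together with the kinetic energy bound $\tr(-\eps^2\Delta)\omega_{N,t} \lesssim N$ and, crucially, the $L^p$ control of $\rho_{|[x_i,\omega_{N,t}]|}$ with $p > 5$, which is exactly what is needed to estimate integrals of the form $\int |x-y|^{-1} \rho_{|[x,\omega_{N,t}]|}(y)\, dy$ via H\"older's inequality (since $|x-y|^{-1} \mathbf{1}(|x-y|\leq r) \in L^{p'}$ for $p' < 3/2$, i.e.\ for $p > 3$, and the extra room up to $p > 5$ is used to make the error genuinely of order $N\eps$ times small powers of $N$). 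Optimizing the cutoff $r$ in $N$ produces the exponents $N^{5/12+\delta}$ appearing in (\ref{eq:HS-bd}).

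Once the Gr\"onwall inequality is established, integrating it over $[0;T]$ and using the initial bound gives $\sup_{t\in[0;T]} \langle \xi_{N,t}, \mathcal{N}\xi_{N,t}\rangle \lesssim N^\alpha + N^{5/6+\delta}$, which upon taking square roots and using the comparison between the Hilbert-Schmidt norm and $\mathcal{N}$ yields (\ref{eq:HS-bd}). The trace-norm bound (\ref{eq:tr-bd}) then follows by interpolation: writing $\tr|\gamma_{N,t}^{(1)} - \omega_{N,t}| \leq \sqrt{\text{rank}} \cdot \|\gamma_{N,t}^{(1)} - \omega_{N,t}\|_\text{HS}$ is too lossy, so instead one uses that $\gamma_{N,t}^{(1)} - \omega_{N,t}$ can be split into a part of controlled rank (coming from the $\mathcal{N}$-particle sectors) and a part controlled directly in trace norm, or one uses the refined argument of \cite{BPS} giving $\tr|\gamma_{N,t}^{(1)} - \omega_{N,t}| \lesssim \big(\langle\mathcal{N}\rangle\big)^{1/2} \big(\langle\mathcal{N}\rangle + \tr(-\eps^2\Delta)\omega_{N,t}\big)^{1/2} / \text{(something)}$, producing the square of the Hilbert-Schmidt exponent up to $N^{1/2}$ factors, hence $N^\alpha + N^{11/12+\delta}$. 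I expect the main obstacle to be the bookkeeping in the short-range/long-range decomposition of the Coulomb potential inside the fluctuation generator: one has to track how the cutoff parameter $r$ interacts with each of the (several) error terms, use the semiclassical commutator structure encoded in (\ref{eq:ass-main}) precisely where the singularity bites, and verify that no term is worse than $N\eps \cdot N^{\text{small}}$ — this is where the hypothesis $p > 5$ (rather than merely $p > 3$) is consumed, and getting the power counting to close is the delicate point.
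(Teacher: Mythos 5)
Your overall architecture is the same as the paper's: lift to Fock space, conjugate with the particle--hole transformation $R_{\omega_{N,t}}$, run a Gronwall argument on $\langle \cU_N(t)\xi_N,\cN\,\cU_N(t)\xi_N\rangle$, then recover the Hilbert--Schmidt bound by duality and the trace bound by paying an extra factor $\|v_{N,t}\|_{\mathrm{HS}}\le N^{1/2}$; the reduction of the initial condition (\ref{eq:conden}) to $\langle\xi_N,\cN\xi_N\rangle\lesssim N^\alpha$ is also as in the paper. The genuine gap is in the only new step, namely how the Coulomb singularity is absorbed inside the derivative of the expected number of fluctuations. Your plan --- split $|x-y|^{-1}$ into a short-range and a long-range piece, treat the tail ``like a regular potential'' and the core by H\"older against $\rho_{|[x,\omega_{N,t}]|}$ --- does not deliver the required estimates. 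The tail $|x|^{-1}\mathbf{1}(|x|>r)$ does not satisfy (\ref{eq:whV}) uniformly in $r$ (its Fourier transform only decays like $|p|^{-2}$), so the regular-potential machinery of \cite{BPS} cannot simply be quoted for it: at every scale one must still trade the potential against commutators with $\omega_{N,t}$. More importantly, the dangerous contribution is the quartic term (the analogue of $\mathrm{III}$, with four annihilators $a(\bar v)a(\bar v)a(u)a(u)$), which leaves no factor of $\cN$ to spare; the quantity that must be controlled there is not an integral of the form $\int |x-y|^{-1}\rho_{|[x,\omega_{N,t}]|}(y)\,dy$, but the trace norm of operators like $\bar v_{N,t}\,\chi\, u_{N,t}$, i.e.\ of commutators of $\omega_{N,t}$ with the localized potential, and your sketch provides no mechanism for that.

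Concretely, the paper decomposes the Coulomb potential through the smooth Fefferman--de la Llave formula (\ref{eq:ff}) into Gaussians $\chi_{(r,z)}$ over all scales, and the decisive new input is Lemma \ref{lm:tr1}: a pointwise bound $\|[\chi_{(r,z)},\omega_{N,t}]\|_{\mathrm{tr}}\le C\,r^{3/2-3\delta}\,\|\rho_{|[x_i,\omega_{N,t}]|}\|_1^{1/6+\delta}\,(\rho^*_{|[x_i,\omega_{N,t}]|}(z))^{5/6-\delta}$ in terms of the Hardy--Littlewood maximal function. The hypothesis $p>5$ is consumed precisely there: after the maximal inequality one needs $\rho_{|[x_i,\omega_{N,t}]|}\in L^{q(5/6-\delta)}$ for some $q>6$ (an exponent larger than $5$) to make $\int_0 dr\, r^{-7/2-3\delta+3/p_1}$ converge at $r=0$, and a second choice $q<6$ for the convergence at $r=\infty$; your heuristic (``$p>3$ for integrability, the rest for power counting'') does not reproduce this and would not close. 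In addition, the quartic term requires $\|\rho_{N,t}\|_{5/3}\lesssim N$, which the paper obtains from Lieb--Thirring together with conservation of the Hartree--Fock energy (this is where the assumption $\tr(-\eps^2\Delta)\omega_N\le CN$ enters --- it plays no role in your sketch), and an optimization of the scale cutoff at $\kappa=\eps^{1/2}$; it is this step that actually produces the error $N\eps^{1/2-\delta}=N^{5/6+O(\delta)}$, hence the exponents $5/12$ and $11/12$, which in your proposal are asserted rather than derived. Without an estimate playing the role of Lemma \ref{lm:tr1} and the Lieb--Thirring/energy-conservation input, the Gronwall inequality you postulate cannot be established.
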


Recall that $\| \omega_{N,t} \|_\text{HS} = N^{1/2}$ and $\tr\, \omega_{N,t} = N$; this implies that the bounds (\ref{eq:HS-bd}) and (\ref{eq:tr-bd}) are non-trivial. They really show that the Hartree-Fock equation is a good approximation for the many-body evolution with a Coulomb interaction. Remark here that the exponent $0 \leq \alpha < 1$ measures the number of particles that, at time $t=0$, are not in the Slater determinant (the initial number of excitations). 

As pointed out in the introduction, the Hartree-Fock equation (\ref{eq:HF}) still depends on $N$. As $N \to \infty$, the Wigner transform of the solution of (\ref{eq:HF}) is expected to converge to a solution of the Vlasov equation. However, this result is still open. In fact, the result of \cite{LP}, which applies to the case of a Coulomb interaction, does not allow $\omega_{N,t}$ to be a projection. 

Despite the fact that we do not know how to prove the bounds (\ref{eq:ass-main}) for the solution of the Hartree-Fock equation, they are consistent with the idea that $\omega_{N,t}$ varies on a length scale of order $\eps$ in the $(x-y)$ direction, while it is regular and it varies on scales of order one in the $(x+y)$ direction. 

There is in fact one special situation, in which the required bounds can be easily shown to hold true. Consider namely an $N$-fermion system described on a finite box $\Lambda$ with volume of order one and periodic boundary conditions. In this case, we can consider translation invariant Slater determinants, whose reduced densities have integral kernels $\omega_N (x;y)$ depending only on $(x-y)$. Also the commutator $[x,\omega_N]$ and its absolute value $|[x,\omega_N]|$ are then translation invariant, and therefore $\rho_{|[x,\omega_N]|}$ is a constant, which we can reasonably assume to be of order $N\eps$ (meaning that $\omega_N$ is a function of $(x-y)$ decaying at distances $|x-y| \gg \eps$ from the diagonal). Then, we trivially have $\| \rho_{|[x,\omega_N]|} \|_p \leq C N\eps$ for all $1 \leq p \leq \infty$. Furthermore, it is easy to check that the Hartree-Fock evolution does not change translation invariant initial data, i.e. in this case we have $\omega_{N,t} = \omega_N$ for all $t \in \bR$. This means that $\| \rho_{|[x,\omega_{N,t}]|} \|_p \leq C N\eps$ for all $1\leq p \leq \infty$ and also for all $t \in \bR$. So, for translation invariant Slater determinants describing $N$ fermions in a box with volume of order one with periodic boundary conditions, Theorem \ref{thm:main} shows (in this case, with no further assumptions), that the many body evolution generated by (\ref{eq:ham-mf}) can be approximated by the Hartree-Fock equation, which means, in other words, that it leaves the state of the system approximately invariant (we stated Theorem \ref{thm:main} for systems defined on $\bR^3$, but the result and its proof can be easily extended to systems defined on a box with volume of order one and with periodic boundary conditions).  

Let us remark that Theorem \ref{thm:main} can be extended by including an external potential in the Hamilton operator (\ref{eq:ham-mf}) (and in the Hartree-Fock equation (\ref{eq:HF})). Of course, in presence of an external potential it may be more difficult to justify the assumption (\ref{eq:ass-main}), especially if the external potential is singular, as it is in (\ref{eq:Hatom-N}). Similarly, let us stress the fact that Theorem \ref{thm:main} remains true if in (\ref{eq:ham-mf}) and in (\ref{eq:HF}) we replace the repulsive Coulomb potential $V(x) = 1/|x|$ with the attractive interaction $V(x) = - 1/|x|$. Also here, however, it may be more difficult to justify (\ref{eq:ass-main}) in the attractive case.   

Finally, let us add a remark concerning the convergence of the higher order reduced densities. Theorem \ref{thm:main} only establishes the convergence of the one-particle reduced density. It turns out that our method can be extended to show the convergence of the $k$-particle reduced density, for any fixed $k \in \bN$, but only when tested against observables that are diagonal in a basis of $L^2 (\bR^{3k})$ consisting of factorized functions.

\section{Fock Space Representation}

To prove Theorem \ref{thm:main} we switch to a Fock space representation of the fermionic system. The fermionic Fock space over $L^2\left( \mathbb{R}^3\right)$ is defined as the direct sum
\begin{align}
\notag
\mathcal{F}= \bigoplus_{n\geq0} L_a^2\left( \mathbb{R}^{3n}\right),
\end{align}
where $L_a^2\left( \mathbb{R}^{3n}\right)$ is the antisymmetric subspace of $L^2\left( \mathbb{R}^{3n}\right)$. 

The number of particle operator on $\cF$ is the closure of the symmetric operator defined by $(\cN \Psi)^{(n)} = n \psi^{(n)}$ for all $\Psi = \{ \psi^{(n)} \}_{n \geq 0} \in \cF$ with $\psi^{(n)} = 0$ for all $n$ large enough. 

On $\cF$, it is useful to introduce creation and annihilation operators. For $f \in L^2\left( \mathbb{R}^{3}\right)$, we define the creation operator $a^* (f)$ and the annihilation operator $a(f)$ through 
\[ \begin{split} 
\left(a^*(f)\Psi\right)^{(n)}(x_1,\ldots, x_n) & := \frac{1}{\sqrt{n}} \sum_{j=1}^n (-1)^j f(x_j) \\ & \hspace{2cm} \times  \psi^{(n-1)}(x_1, \ldots,x_{j-1}, x_{j+1}, \ldots, x_n)\,, \\
\left(a(f)\Psi\right)^{(n)}(x_1,\ldots, x_n)&:= \sqrt{n+1} \int \rd x \; \overline{f}(x) \, \psi^{(n+1)} (x, x_1,\ldots, x_n)\,,
\end{split}\] 
for all $\Psi = \{ \psi^{(n)} \}_{n \geq 0} \in \cF$ with $\psi^{(n)} = 0$ for all $n$ large enough. Creation and annihilation operators satisfy canonical anticommutation relations 
\begin{align}
\left\lbrace a(f), a^*(g) \right\rbrace = \langle f, g \rangle, \quad \left\lbrace a(f), a(g) \right\rbrace = \left\lbrace a^*(f), a^*(g) \right\rbrace =0
\label{eq:CAR} 
\end{align}
for all $f,g \in L^2 \left( \mathbb{R}^3\right)$. Using the anticommutation relations it is easy to see that $a(f)$ and $a^* (f)$ extend to bounded operators on $\cF$, with $\| a (f) \| = \| a^* (f) \| = \| f \|_2$, and that $a^* (f)$ is the adjoint of $a(f)$. 

It is also convenient to define operator valued distributions $a_x^* , a_x$ for $x\in \bR^3$, such that 
\[ a^* (f) = \int dx \, f(x) a_x^* , \quad a(f) = \int dx \, \overline{f} (x) a_x\,. \]
In terms of the distributions $a_x^*, a_x$, we find 
\[ \cN = \int dx \, a_x^* a_x\,. \]

More generally, for an operator $J$ on $L^2 (\bR^3)$ we define its second quantization $d\Gamma (J)$ so that its restriction to the $n$-particle sector has the form
\[ d\Gamma (J) |_{\cF_n} = \sum_{j=1}^n J^{(j)} \]
where $J^{(j)} = 1^{\otimes (n-j)} \otimes J \otimes 1^{\otimes (j-1)}$ acts non-trivially only on the $j$-th particle. If $J$ has the integral kernel $J (x;y)$, we can write $d\Gamma (J)$ in terms of the distributions $a_x^*, a_x$ as  
\[ d\Gamma (J) = \int dx dy \, J (x;y) a_x^* a_y\,. \]
For example, $d\Gamma (1) = \cN$. In the next lemma we collect some bounds for the second quantization of one-particle operators. Its proof can be found in \cite[Lemma 3.1]{BPS}
\begin{lemma}
\label{lemma:bounds}
For every bounded operator $J$ on $L^2 \left( \mathbb{R}^3 \right)$, we have 
\[ \begin{split}
\langle \Psi , d\Gamma (J) \Psi \rangle &\leq \| J \| \langle \Psi, \cN \Psi \rangle\,, \\ 
\left\| \rd \Gamma (J) \Psi \right\| &\leq \left\| J \right\| \; \left\| \mathcal{N}\Psi\right\|,
\end{split}\]
for every $\Psi \in \mathcal{F}$. If $J$ is a Hilbert-Schmidt operator, we also have the bounds 
\[ \begin{split} 
\left\| \rd \Gamma (J) \psi \right\| &\leq \left\| J \right\|_{\mathrm{HS}} \left\| \mathcal{N}^{1/2} \psi \right\| \\
\left\| \int \rd x \; \rd y \; J (x; x') a_x a_{x'} \psi \right\| & \leq \| J \|_{\mathrm{HS}} \left\| \mathcal{N}^{1/2} \Psi \right\| \\
\left\| \int \rd x \; \rd y \; J (x; x') a^*_x a^*_{x'} \psi \right\| & \leq \| J \|_{\mathrm{HS}} \left\| \mathcal{N}^{1/2} \Psi \right\| 
\end{split}\]
for every $\Psi \in \mathcal{F}$. Finally, if $J$ is a trace class operator, we obtain 
\begin{align*}
\left\| \rd \Gamma (J) \right\| &\leq 2  \left\| J \right\|_{\mathrm{tr}} \\
\left\| \int \rd x \; \rd y \; J (x; x') a_x a_{x'} \right\| & \leq 2  \left\| J \right\|_{\mathrm{tr}} \\
\left\| \int \rd x \; \rd y \;  J (x; x') a^*_x a^*_{x'}  \right\| & \leq 2  \left\| J \right\|_{\mathrm{tr}} 
\end{align*}
where $\| J \|_{\mathrm{tr}} = \tr |J| = \tr \sqrt{J^*J}$ indicates the trace norm of $J$.
\end{lemma}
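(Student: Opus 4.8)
\emph{Overview.} All of these are standard ``toolbox'' estimates, and I would deduce them from three elementary facts about the fermionic Fock space: on $\cF$ one has $\|a(f)\|=\|a^*(f)\|=\|f\|_2$; $\int dx\,\|a_x\Psi\|^2=\langle\Psi,\mathcal{N}\Psi\rangle$; and, for any orthonormal basis $\{\varphi_j\}$ of $L^2(\bR^3)$, the occupation numbers $n_j:=a^*(\varphi_j)a(\varphi_j)$ pairwise commute, take values in $\{0,1\}$, and sum to $\mathcal{N}$. For the bounds involving $\|J\|$ I use that on the $n$-particle sector $d\Gamma(J)=\sum_{j=1}^nJ^{(j)}$ with $\|J^{(j)}\|=\|J\|$: by the triangle inequality $\|d\Gamma(J)\Psi^{(n)}\|\le n\|J\|\,\|\Psi^{(n)}\|$, and then $\|d\Gamma(J)\Psi\|^2=\sum_n\|d\Gamma(J)\Psi^{(n)}\|^2\le\|J\|^2\sum_n n^2\|\Psi^{(n)}\|^2=\|J\|^2\|\mathcal{N}\Psi\|^2$, which is the second inequality. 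For the first, after splitting a bounded $J$ into self-adjoint and anti-self-adjoint parts it suffices to take $J=J^*$; then $\sum_{j=1}^nJ^{(j)}\le n\|J\|$ as an operator on the $n$-particle sector, so $\langle\Psi^{(n)},d\Gamma(J)\Psi^{(n)}\rangle\le n\|J\|\,\|\Psi^{(n)}\|^2$, and summing over $n$ yields $\langle\Psi,d\Gamma(J)\Psi\rangle\le\|J\|\,\langle\Psi,\mathcal{N}\Psi\rangle$.

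\emph{Bounds involving $\|J\|_{\mathrm{HS}}$.} Writing $d\Gamma(J)=\int dx\,dy\,J(x;y)\,a_x^*a_y$ and pairing with an arbitrary $\Phi$, one has $\langle\Phi,d\Gamma(J)\Psi\rangle=\int dy\,\langle a(J(\cdot;y))\Phi,a_y\Psi\rangle$; Cauchy--Schwarz first in $\cF$ and then in $y$, using $\int dy\,\|J(\cdot;y)\|_2^2=\|J\|_{\mathrm{HS}}^2$ and $\int dy\,\|a_y\Psi\|^2=\langle\Psi,\mathcal{N}\Psi\rangle$, gives $\|d\Gamma(J)\Psi\|\le\|J\|_{\mathrm{HS}}\,\|\mathcal{N}^{1/2}\Psi\|$. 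For $\int dx\,dx'\,J(x;x')\,a_xa_{x'}$ only the antisymmetric part $\widetilde J$ of $J$ contributes and $\|\widetilde J\|_{\mathrm{HS}}\le\|J\|_{\mathrm{HS}}$; I would then use the canonical decomposition of the compact antisymmetric operator $\widetilde J$, namely $\widetilde J=\sum_k\mu_k\big(|\varphi_{2k-1}\rangle\langle\overline{\varphi_{2k}}|-|\varphi_{2k}\rangle\langle\overline{\varphi_{2k-1}}|\big)$ with $\{\varphi_j\}$ orthonormal and $\sum_k\mu_k^2=\tfrac12\|\widetilde J\|_{\mathrm{HS}}^2$, which turns the operator into $\sum_k2\mu_k\,a(\overline{\varphi_{2k-1}})\,a(\overline{\varphi_{2k}})$. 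Cauchy--Schwarz in $k$ then reduces the estimate to $\sum_k\|a(\overline{\varphi_{2k-1}})a(\overline{\varphi_{2k}})\Psi\|^2=\sum_k\langle\Psi,n_{2k-1}n_{2k}\Psi\rangle\le\tfrac12\langle\Psi,\mathcal{N}\Psi\rangle$, the last step using $n_{2k-1}n_{2k}\le\tfrac12(n_{2k-1}+n_{2k})$ for $\{0,1\}$-valued commuting operators; altogether $\|\int J(x;x')a_xa_{x'}\Psi\|\le\|J\|_{\mathrm{HS}}\,\|\mathcal{N}^{1/2}\Psi\|$. The $a^*a^*$ version follows by expanding its squared norm, anticommuting the distinct modes, and recognizing one piece as the $aa$ operator applied to $\overline{\widetilde J}$ and the remaining piece as $\le\sum_k\mu_k^2\|\Psi\|^2\le\tfrac12\|J\|_{\mathrm{HS}}^2\|\Psi\|^2$; thus the analogous bound holds with $\mathcal{N}$ replaced by $\mathcal{N}+1$, a replacement forced by the vacuum.

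\emph{Bounds involving $\|J\|_{\mathrm{tr}}$.} A singular value decomposition $J=\sum_k\lambda_k\,|u_k\rangle\langle v_k|$ with $\{u_k\},\{v_k\}$ orthonormal and $\sum_k\lambda_k=\|J\|_{\mathrm{tr}}$ gives $d\Gamma(J)=\sum_k\lambda_k\,a^*(u_k)a(v_k)$, a norm-convergent sum of operators of norm $\le1$, hence $\|d\Gamma(J)\|\le\|J\|_{\mathrm{tr}}$ (the constant $2$ appears if one prefers to split a non-self-adjoint $J$ into self-adjoint and anti-self-adjoint parts). Likewise, applying this to the canonical decomposition of $\widetilde J$ gives $\int J(x;x')a_xa_{x'}=2\sum_k\mu_k\,a(\overline{\varphi_{2k-1}})a(\overline{\varphi_{2k}})$, a sum of norm $\le2\sum_k\mu_k=\|\widetilde J\|_{\mathrm{tr}}\le\|J\|_{\mathrm{tr}}$, and symmetrically for the $a^*a^*$ term.

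\emph{Main point of care.} There is no genuine obstacle; the only delicate point is obtaining the correct power $\mathcal{N}^{1/2}$ (rather than $\mathcal{N}$) in the Hilbert--Schmidt estimates for the $aa$ and $a^*a^*$ operators: a crude bound by $\int dx\,dx'\,\|a_xa_{x'}\Psi\|^2=\langle\Psi,\mathcal{N}(\mathcal{N}-1)\Psi\rangle$ would only give $\|\mathcal{N}\Psi\|$, and it is precisely the canonical decomposition of the antisymmetric kernel together with the $\{0,1\}$-valuedness of fermionic occupation numbers that upgrades this to the sharp $\|\mathcal{N}^{1/2}\Psi\|$.
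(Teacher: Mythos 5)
Your proposal is essentially correct, but it is worth noting that the paper does not prove this lemma at all: it simply cites \cite[Lemma 3.1]{BPS}, so the relevant comparison is with that standard argument. For the bounded-operator and trace-class bounds you follow the same elementary route (sector-wise estimates; singular value decomposition), and your SVD argument even gives $\|d\Gamma(J)\|\le \|J\|_{\mathrm{tr}}$, better than the stated factor $2$. Where you genuinely diverge is in the Hilbert--Schmidt bounds for the quadratic terms: the standard proof writes $\int \rd x\,\rd x'\, J(x;x')a_xa_{x'}=\int \rd x'\, a\bigl(\overline{J(\cdot\,;x')}\bigr)a_{x'}$ and uses only $\|a(f)\|\le\|f\|_2$ in one variable together with Cauchy--Schwarz and $\int \rd x'\,\|a_{x'}\Psi\|^2=\langle\Psi,\cN\Psi\rangle$ in the other --- i.e.\ the same one-line mechanism as your $d\Gamma(J)$ estimate --- so the canonical (Youla) decomposition of the antisymmetric part and the $\{0,1\}$-valuedness of the occupation numbers are not actually needed to reach the sharp $\cN^{1/2}$; your closing ``main point of care'' overstates this. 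Your decomposition argument is nonetheless valid and has the merit of making the role of antisymmetry and of the fermionic statistics explicit. Two smaller remarks: (a) you are right that the $a^*_xa^*_{x'}$ bound cannot hold with $\cN^{1/2}$ (test it on the vacuum); the statement as printed is a typo, and indeed in the proof of Theorem 2.2 the paper uses $(\cN+1)^{1/2}$, so flagging this is a genuine plus of your write-up; (b) in that same step your constant bookkeeping slips: with $B=\sum_k 2\mu_k\, a(\overline{\varphi_{2k-1}})a(\overline{\varphi_{2k}})$ one has $[B,B^*]=\sum_k 4\mu_k^2\,(1-n_{2k-1}-n_{2k})\le 2\|\widetilde J\|_{\mathrm{HS}}^2$, not $\sum_k\mu_k^2$, so the clean conclusion is a bound by $\|\widetilde J\|_{\mathrm{HS}}\,\|(\cN+2)^{1/2}\Psi\|$ (equivalently $\sqrt{2}\,\|\widetilde J\|_{\mathrm{HS}}\,\|(\cN+1)^{1/2}\Psi\|$); this is immaterial for the paper, where only the form $C\|J\|_{\mathrm{HS}}\|(\cN+1)^{1/2}\Psi\|$ is ever used.
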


For a Fock space vector $\Psi \in \cF$, we can define the one-particle reduced density as the non-negative trace class operator on $L^2 (\bR^3)$ with integral kernel 
\begin{equation}\label{eq:gamma-FS} \gamma_\Psi^{(1)} (x;y) = \langle \Psi, a_y^* a_x \Psi \rangle\,. \end{equation}
For a $N$ particle state $\Psi = \{ 0, 0 , \dots , \psi_N, \dots \} \in \cF$, it is easy to check that this definition coincides with (\ref{eq:gamma1}). In fact
\[ \begin{split}  
\langle \Psi, a_y^* a_x \Psi \rangle &= \langle a_y \Psi, a_x \Psi \rangle \\  &= \langle (a_y \Psi)^{(N-1)}, (a_x \Psi)^{(N-1)} \rangle \\ &= N \int dx_1 \dots dx_{N-1} \psi_N (x,x_1, \dots , x_{N-1}) \overline{\psi}_N (y, x_1, \dots , x_{N-1})\,. \end{split} \]
Furthermore, for a one-particle observable $J$ on $L^2 (\bR^3)$, we find that the expectation of the second quantization of $J$ in the Fock state $\Psi$ is given by 
\[ \langle \Psi, d\Gamma (J) \Psi \rangle = \int dx dy \, J (x;y) \, \langle \Psi, a_x^* a_y \Psi \rangle = \tr \, J \gamma^{(1)}_\Psi\,. \]
This motivates the definition (\ref{eq:gamma-FS}). Notice in particular, that with this definition
\[ \tr \, \gamma^{(1)}_\Psi = \langle \Psi , \cN \Psi \rangle \]
is the expected number of particles in $\Psi$. 

Next, we introduce the Hamilton operator $\cH_N$ on the fermionic Fock space $\cF$. Formally, we define $\cH_N$ in terms of the distributions $a_x^*, a_x$ as
\begin{equation}\label{eq:FocKHN} \cH_N = \eps^2 \int dx \,  \nabla_x  a_x^* \nabla_x a_x + \frac{1}{2N} \int dx dy \, V(x-y) a_x^* a_y^* a_y a_x\,.
\end{equation}
More precisely, $\cH_N$ is the self-adjoint operator whose restriction on the $n$-particle sector of $\cF$ is given by 
\[ \cH_N |_{\cF_n} = \sum_{j=1}^n -\eps^2 \Delta_{x_j} + \frac{1}{N} \sum_{i<j}^n V (x_i - x_j)\,. \]
In particular, when restricted on $\cF_N$, the Hamilton operator $\cH_N$ coincide with the mean field Hamilton operator (\ref{eq:ham-mf}) defined in the previous section (and thus, for initial data in $\cF$ with exactly $N$ particles, the dynamics generated by $\cH_N$ coincides exactly with the evolution introduced in the previous section).

Let $\{ f_j \}_{j=1}^N$ be an orthonormal system in $L^2 (\bR^3)$. On $\cF$, we consider the Slater determinant
\[ a^* (f_1) \dots a^* (f_N) \Omega = \, \left\{ 0, \dots, 0, \frac{1}{\sqrt{N!}}  \det (f_i (x_j))_{i,j \leq N}, 0, \dots \right\}\,, \]
where the only non-trivial entry is in the $N$-particle sector. As stated in (\ref{eq:omega0}), the one-particle reduced density associated to this Slater determinant is given by the orthogonal projection 
\[ \omega_N = \sum_{j=1}^N |f_j \rangle \langle f_j|\,. \]
An important observation is the fact that there exists a unitary operator, that we will denote by $R_{\omega_N} : \cF \to \cF$ with the following two properties: 
\[ R_{\omega_N} \Omega = a^* (f_1) \dots a^* (f_N) \Omega \]
and 
\begin{equation}\label{eq:bog} R^*_{\omega_N} a (g) R_{\omega_N} = a (u_N g) + a^* (\overline{v}_N \overline{g})
\end{equation}
where $u_N = 1- \omega_N$ and $v_N = \sum_{j=1}^N | \overline{f}_j \rangle \langle f_j|$. In other words, if we complete the orthonormal system $\{ f_j \}_{j=1}^N$ to an orthonormal basis $\{ f_j \}_{j \geq 1}$ of $L^2 (\bR^3)$, we find 
\[ R^*_{\omega_N} a(f_j) R_{\omega_N} = a (f_j), \quad \text{and }  \quad R^*_{\omega_N} a^* (f_j) R_{\omega_N} = a^* (f_j) \]
if $j > N$ while 
\[ R^*_{\omega_N} a(f_j) R_{\omega_N} = a^* (\overline{f}_j), \quad \text{and }  \quad  R^*_{\omega_N} a^* (f_j) R_{\omega_N} = a (\overline{f}_j) \]
if $j \leq N$. The map $R_{\omega_N}$ is known as a particle-hole transformation. It let us switch to a new representation of the system; the new vacuum describes the Slater determinant with reduced density $\omega_N$. The new creation operators create excitations of the Slater determinant, i.e. either particles outside the determinant or holes in it. The proof of the existence of the unitary operator $R_{\omega_N}$ with the properties listed above can be found, for example, in \cite{Solovej}. 

Theorem \ref{thm:main} is a consequence of the following theorem for the evolution of approximate Slater determinants in the Fock space $\cF$.
\begin{theorem}\label{thm:FS}
Let $\omega_N$ be a sequence of orthogonal projections on $L^2 (\bR^3)$, with $\tr \, \omega_N = N$ and $\tr\, (-\eps^2 \Delta) \, \omega_N \leq C N$. Let $\omega_{N,t}$ denote the solution of the Hartree-Fock equation (\ref{eq:HF}) with initial data $\omega_{N,0} = \omega_N$.  We assume that there exists $T > 0$, $p > 5$ and $C > 0$ such that 
\begin{equation}\label{eq:assFS} \sup_{t \in [0;T]} \sum_{i=1}^3 \left[ \| \rho_{|[x_i, \omega_{N,t}]|} \|_1 +  \| \rho_{|[x_i,\omega_{N,t}]|} \|_p \right]  \leq C N\eps .
\end{equation}

Let $\xi_N \in \cF$ be a sequence with 
\[ \langle \xi_N, \cN \xi_N \rangle \leq C N^\alpha \]  
for an exponent $\alpha$, with $0 \leq \alpha < 1$.  
We consider the evolution 
\[ \Psi_{N,t} = e^{-i\cH_N t/\eps} R_{\omega_N} \xi_N \]
and denote by $\gamma^{(1)}_{N,t}$ the one-particle reduced density of $\Psi_{N,t}$, as defined in (\ref{eq:gamma-FS}). Then for all $\delta > 0$ there is a constant $C > 0$ such that 
\[ \sup_{t \in [0;T]} \left\| \gamma_{N,t}^{(1)} - \omega_{N,t} \right\|_\text{HS} \leq C \left[ N^{\alpha/2} + N^{5/12 + \delta} \right] \]
and 
\[ \sup_{t \in [0;T]} \tr \, \left| \gamma_{N,t}^{(1)} - \omega_{N,t} \right| \leq C \left[ N^{\alpha} + N^{11/12 + \delta}\right] \, .\]
\end{theorem}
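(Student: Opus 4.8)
\section*{Proof proposal for Theorem \ref{thm:FS}}

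The plan is to work entirely in the particle-hole representation and to control the growth of the number of excitations measured against a shifted dynamics. Introduce the fluctuation vector $\xi_{N,t} := e^{i\cH_N t/\eps} R^*_{\omega_{N,t}} e^{-i\cH_N t/\eps} R_{\omega_N}\xi_N$, i.e.\ write $\Psi_{N,t} = e^{-i\cH_N t/\eps} R_{\omega_N}\xi_N = R_{\omega_{N,t}}\xi_{N,t}$ with $\xi_{N,0}=\xi_N$. A direct computation using the Hartree-Fock equation \eqref{eq:HF} and the conjugation formula \eqref{eq:bog} shows that $\xi_{N,t}$ solves a Schr\"odinger-type equation $i\eps\partial_t \xi_{N,t} = \mathcal{L}_{N,t}\xi_{N,t}$ with a time-dependent generator $\mathcal{L}_{N,t}$ obtained by conjugating $\cH_N$ with $R_{\omega_{N,t}}$ and subtracting the Hartree-Fock part. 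The generator splits into a quadratic (quasi-free) piece, a cubic piece, and a quartic piece in the shifted creation/annihilation operators; the quadratic piece contains the essential structure while the cubic and quartic pieces carry the Coulomb singularity. The first step is to write $\mathcal{L}_{N,t}$ explicitly and identify these pieces, exactly as in the regular-potential treatment of \cite{BPS}, but keeping track of the extra singular terms.

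The heart of the proof is a Gronwall estimate for the growth of $\langle \xi_{N,t},(\cN+1)\xi_{N,t}\rangle$. Differentiating in time and using $i\eps\partial_t\xi_{N,t}=\mathcal{L}_{N,t}\xi_{N,t}$, the quasi-free part drops out (it commutes with $\cN$ up to controllable terms), and one is left with bounding the expectation of the commutator $[\cN,\mathcal{L}_{N,t}]$, which is built from the cubic and quartic terms. For the regular case this is done with the a priori bound $\tr|[x,\omega_{N,t}]|\le CN\eps$; for Coulomb one needs to absorb the $|x-y|^{-1}$ singularity. The key technical input will be to estimate terms of the schematic form $N^{-1}\int dxdy\,|x-y|^{-1}\,(\text{kernel built from }\omega_{N,t},\bar v_{N,t})\,a^\sharp a^\sharp a^\sharp$ by splitting the singularity via a Hardy or Hardy--Littlewood--Sobolev inequality and using the assumption \eqref{eq:assFS}: the $L^p$ bound with $p>5$ on $\rho_{|[x,\omega_{N,t}]|}$ is precisely what lets one dominate a convolution with $|x|^{-1}$ that would otherwise be out of reach with only the $L^1$ bound. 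I expect the kinetic energy bound $\tr(-\eps^2\Delta)\omega_N\le CN$, propagated along the Hartree-Fock flow, to enter here too, to control the density $\rho_t$ and the exchange term in $L^{5/3}$-type norms via Lieb--Thirring. The outcome should be a differential inequality $\partial_t\langle\xi_{N,t},(\cN+1)\xi_{N,t}\rangle \le C\eps^{-1}\big(\langle\xi_{N,t},(\cN+1)\xi_{N,t}\rangle + N^{??}\big)$ whose integration over $[0,T]$, together with the initial bound $\langle\xi_N,\cN\xi_N\rangle\le CN^\alpha$, yields $\sup_{t\in[0,T]}\langle\xi_{N,t},(\cN+1)\xi_{N,t}\rangle \le C(N^\alpha+N^{5/6+\delta})$; the appearance of $\delta$ reflects a logarithmic loss in iterating the singular estimates or in a dyadic decomposition of the potential.

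The final step converts the number-of-excitations bound into the stated bounds on $\gamma^{(1)}_{N,t}-\omega_{N,t}$. Using $\Psi_{N,t}=R_{\omega_{N,t}}\xi_{N,t}$ and the action \eqref{eq:bog}, one computes $\gamma^{(1)}_{N,t}(x;y)=\langle\Psi_{N,t},a_y^*a_x\Psi_{N,t}\rangle$ and finds $\gamma^{(1)}_{N,t}-\omega_{N,t}$ expressed through $\langle\xi_{N,t},a^\sharp a^\sharp\xi_{N,t}\rangle$-type quantities; testing against an arbitrary Hilbert-Schmidt operator $J$ and applying Lemma \ref{lemma:bounds} gives $|\tr J(\gamma^{(1)}_{N,t}-\omega_{N,t})|\le C\|J\|_{\mathrm{HS}}\langle\xi_{N,t},(\cN+1)\xi_{N,t}\rangle^{1/2}$, hence $\|\gamma^{(1)}_{N,t}-\omega_{N,t}\|_{\mathrm{HS}}\le C\langle\xi_{N,t},(\cN+1)\xi_{N,t}\rangle^{1/2}\le C(N^{\alpha/2}+N^{5/12+\delta})$. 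For the trace-norm bound one uses instead the duality with bounded $J$ and the crude estimate $\tr|\gamma^{(1)}_{N,t}-\omega_{N,t}|\le C\langle\xi_{N,t},(\cN+1)\xi_{N,t}\rangle$, giving the $N^\alpha+N^{11/12+\delta}$ bound. The main obstacle throughout is the cubic term in $\mathcal{L}_{N,t}$ combined with the Coulomb singularity: unlike in \cite{BPS}, the bound $\int dp\,|\widehat V(p)|(1+p^2)<\infty$ is unavailable, so the whole point is to trade the missing potential regularity for spatial-decay/regularity information on $\omega_{N,t}$ encoded in \eqref{eq:assFS}, and making the Hardy-type splitting quantitatively match the target exponent $5/12$ is where the care is needed.
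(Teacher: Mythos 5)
Your overall skeleton coincides with the paper's: particle--hole conjugation, a Gronwall bound on the expected number of excitations along the fluctuation dynamics, and then the duality argument with Lemma \ref{lemma:bounds} to convert that bound into the Hilbert--Schmidt and trace-norm estimates (your last step is essentially the paper's proof of Theorem \ref{thm:FS} from Proposition \ref{prop:Gronwall}). However, the core of the theorem is precisely the part you leave as a gesture: how the Coulomb singularity is absorbed using assumption \eqref{eq:assFS}. Saying that one splits the singularity ``via a Hardy or Hardy--Littlewood--Sobolev inequality'' does not produce the claimed error $N^{5/6+\delta}$, and as written your differential inequality $\partial_t\langle\xi_{N,t},(\cN+1)\xi_{N,t}\rangle\le C\eps^{-1}(\cdots)$ cannot close: Gronwall on a time interval of order one would then give an $e^{C/\eps}$ factor. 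The whole point is that every singular term must gain a full factor $\eps$, cancelling the $\eps^{-1}$ from the time derivative. In the paper this is achieved by (i) the smooth Fefferman--de la Llave representation \eqref{eq:ff}, which writes $|x-y|^{-1}$ as an average over Gaussian cutoffs $\chi_{(r,z)}$, so that the dangerous pair operators become $B_{r,z}$ with $\|B_{r,z}\|\le 2\|[\chi_{(r,z)},\omega_{N,t}]\|_{\mathrm{tr}}$; (ii) the new pointwise bound of Lemma \ref{lm:tr1}, $\|[\chi_{(r,z)},\omega_{N,t}]\|_{\mathrm{tr}}\le C r^{3/2-3\delta}\|\rho_{|[x_i,\omega_{N,t}]|}\|_1^{1/6+\delta}(\rho^*_{|[x_i,\omega_{N,t}]|}(z))^{5/6-\delta}$, where the Hardy--Littlewood maximal function and the maximal inequality are exactly where the hypothesis $p>5$ (together with the $L^1$ bound) is consumed, to make the $r$-integrals converge both at $0$ and at $\infty$; (iii) for the purely off-diagonal term, a splitting of the $r$-integral at a scale $\kappa$, with the small-$r$ part controlled through $\|\rho_{N,t}\|_{5/3}$ (Lieb--Thirring plus conservation of the Hartree--Fock energy, which is where $\tr(-\eps^2\Delta)\omega_N\le CN$ enters --- not through propagation of a kinetic bound along the flow) and the large-$r$ part by the trace-norm bound squared; optimizing $\kappa=\eps^{1/2}$ gives the error $N\eps^{1/2-\delta}=N^{5/6+\delta'}$. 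None of this mechanism, nor why $5/6$ is the resulting exponent, is present in your proposal, so the central estimate is unproved.

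Two smaller points. First, your definition $\xi_{N,t}=e^{i\cH_Nt/\eps}R^*_{\omega_{N,t}}e^{-i\cH_Nt/\eps}R_{\omega_N}\xi_N$ is inconsistent with $\Psi_{N,t}=R_{\omega_{N,t}}\xi_{N,t}$; the correct fluctuation vector is $\xi_{N,t}=R^*_{\omega_{N,t}}e^{-i\cH_Nt/\eps}R_{\omega_N}\xi_N=\cU_N(t)\xi_N$, without the extra free factor. Second, your ``crude estimate'' $\tr|\gamma^{(1)}_{N,t}-\omega_{N,t}|\le C\langle\xi_{N,t},(\cN+1)\xi_{N,t}\rangle$ is not justified and would in fact yield a bound \emph{better} than the theorem claims: the off-diagonal contributions $a^*(u_{t,y})a^*(\bar v_{t,x})$ in the kernel of $\gamma^{(1)}_{N,t}-\omega_{N,t}$ are only controlled, for bounded $J$, via $\|v_{N,t}Ju_{N,t}\|_{\mathrm{HS}}\le\|J\|\,\|v_{N,t}\|_{\mathrm{HS}}\le\|J\|N^{1/2}$ times $\|(\cN+1)^{1/2}\cU_N(t)\xi_N\|$, and it is exactly this cross term that produces the exponent $11/12+\delta$.
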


Let us show how Theorem \ref{thm:FS} implies the statement of Theorem \ref{thm:main}, where we consider the evolution of $N$-particle states. 

\begin{proof}[Proof of Theorem \ref{thm:main}]
Set $\Psi_N = \{ 0, \dots , 0, \psi_N, 0, \dots \}$ and $\xi_N = R_{\omega_N}^* \Psi_N \in \cF$. Then we have $\Psi_N = R_{\omega_N} \xi_N$, and 
\[ \begin{split} \langle \xi_N, \cN \xi_N \rangle &= \langle R_{\omega_N}^* \Psi_N, \cN R_{\omega_N}^* \Psi_N \rangle \\ &= \int dx \, \langle \Psi_N, \left( a^* (u_x) + a (\overline{v}_x) \right) \left( a (u_x) + a^* (\overline{v}_x) \right) \Psi_N \rangle\,.\end{split} \]
Using the anticommutation relations we find $a(\overline{v}_x) a^* (\overline{v}_x) = - a^* (\overline{v}_x) a(\overline{v}_x) + \langle \overline{v}_x , \overline{v}_x \rangle$. Since $u_N = 1- \omega_N$ is orthogonal to $\omega_N$, we conclude that  
\[\begin{split} \langle \xi_N, \cN \xi_N \rangle &= \langle \Psi_N, \left[ d\Gamma (u_N) - d\Gamma (\omega_N) + N \right] \Psi_N \rangle \\ &= 2 \, \tr \, \gamma_N^{(1)} \left( 1- \omega_N \right)= 2 \, \tr (\gamma_N^{(1)} - \omega_N) (1-\omega_N)\,.\end{split} \]
This implies that 
\[ \langle \xi_N , \cN \xi_N \rangle \leq 2 \, \tr \, \left| \gamma_N^{(1)} - \omega_N \right| \leq C N^\alpha \]
for an exponent $0 \leq \alpha < 1$, from the assumption (\ref{eq:conden}). Hence, we can apply Theorem \ref{thm:FS} and we obtain that  
\[ \sup_{t \in [0;T]} \, \left\| \gamma^{(1)}_{N,t} - \omega_{N,t} \right\|_\text{HS} \leq C \left[ N^{\alpha/2} + N^{5/12+\delta} \right] \]
and that 
\[ \sup_{t \in [0;T]} \, \tr \left| \gamma^{(1)}_{N,t} - \omega_{N,t} \right|  \leq C \left[ N^\alpha + N^{11/12 + \delta} \right] \]
for any $\delta > 0$.
\end{proof}

In order to prove Theorem \ref{thm:FS}, we define the fluctuation dynamics
\begin{equation}\label{eq:fluc} \cU_N (t) = R_{\omega_{N,t}}^* e^{-i\cH_N t/\eps} R_{\omega_{N}} 
\end{equation}
and we observe that 
\[ \Psi_{N,t} = e^{-i\cH_N t/\eps} R_{\omega_N} \xi_N = R_{\omega_{N,t}} \cU_N (t) \xi_{N}\,. \]
The vector $\cU_N (t) \xi_N$ describes the excitations at time $t$. The key step in the proof of Theorem \ref{thm:FS} is the following bound on the expectation of the operator $\cN$ in the state $\cU_N (t) \xi_{N}$. This is a bound on the expected number of excitations of the Slater determinant in the state $\Psi_{N,t}$. 
\begin{proposition}\label{prop:Gronwall}
Let $\omega_N$ be a sequence of orthogonal projections on $L^2 (\bR^3)$, with $\tr \, \omega_N = N$ and $\tr\, (-\eps^2 \Delta) \, \omega_N \leq C N$. Suppose that there exists $T >0$, $p > 5$ and $C > 0$ such that
\begin{equation}\label{eq:assFS-prop}
 \sup_{t \in [0;T]} \sum_{i=1}^3 \left[ \| \rho_{|[x_i, \omega_{N,t}]|} \|_1 +  \| \rho_{|[x_i,\omega_{N,t}]|} \|_p \right]  \leq C N\eps .
\end{equation}
Let $\cU_N (t)$ be the fluctuation dynamics defined in (\ref{eq:fluc}) and $\xi_N \in \cF$. Then, for every $\delta > 0$ small enough, there exists a constant $C > 0$ such that 
\[ \sup_{t \in [0;T]} \, \langle \cU_N (t) \xi_N, \cN \cU_N (t) \xi_N \rangle \leq C \left[ \langle \xi_N , \cN \xi_N \rangle + N^{5/6+\delta} \right] \, , \]
for all $\xi_N \in \cF$ with $\| \xi_N \| = 1$. 
\end{proposition}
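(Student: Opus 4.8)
The plan is to bound the growth of the expected number of excitations $\langle \cU_N(t)\xi_N, \cN \cU_N(t)\xi_N\rangle$ by a Gr\"onwall argument on the time derivative. Writing $\xi_{N,t} = \cU_N(t)\xi_N$, we have
\[
\frac{d}{dt} \langle \xi_{N,t}, \cN \xi_{N,t}\rangle = \frac{i}{\eps}\, \langle \xi_{N,t}, \big[ \cL_N(t), \cN \big]\, \xi_{N,t}\rangle\,,
\]
where $\cL_N(t)$ is the generator of the fluctuation dynamics, $\cL_N(t) = (i\eps \partial_t R^*_{\omega_{N,t}}) R_{\omega_{N,t}} + R^*_{\omega_{N,t}} \cH_N R_{\omega_{N,t}}$. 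The first step is therefore to compute $\cL_N(t)$ explicitly by conjugating $\cH_N$ with the particle-hole transformation $R_{\omega_{N,t}}$ using (\ref{eq:bog}), and combining with the term coming from the time derivative of $R_{\omega_{N,t}}$; because $\omega_{N,t}$ solves the Hartree-Fock equation (\ref{eq:HF}), the ``diagonal'' quadratic pieces (the Hartree-Fock Hamiltonian $d\Gamma(-\eps^2\Delta + V*\rho_t - X_t)$ and a constant) cancel in the commutator with $\cN$ since they commute with $\cN$. What survives are the terms that do not preserve particle number: the cubic terms (one creation/annihilation of an excitation paired with an $a^*a^*$ or $aa$) and the purely quadratic off-diagonal terms $\int dx\, dy\, (\ldots) a^*_x a^*_y$, $\int dx\, dy\, (\ldots) a_x a_y$ carrying the kernels built from $\omega_{N,t}$, $u_{N,t}=1-\omega_{N,t}$, $v_{N,t}$ and the interaction $V(x-y)=1/|x-y|$.

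The core of the argument is then to estimate $\langle \xi_{N,t}, [\cL_N(t),\cN]\xi_{N,t}\rangle$ term by term. The quadratic off-diagonal contributions are controlled using the Hilbert-Schmidt bounds in Lemma \ref{lemma:bounds}: the relevant kernels are essentially $V$ convolved against $\omega_{N,t} v_{N,t}$-type operators, and one extracts a commutator $[x_i,\omega_{N,t}]$ to trade a factor of $|x-y|^{-1}$ singularity against regularity in the $x-y$ direction, the Hilbert-Schmidt norm being bounded by $\|\rho_{|[x_i,\omega_{N,t}]|}\|_p$ for a suitable $p$ via a Hardy-Littlewood-Sobolev / weak-Young inequality — here is exactly where the hypothesis $p>5$ and the bound $\|\rho_{|[x_i,\omega_{N,t}]|}\|_p \leq CN\eps$ enter, yielding a contribution of size $\eps^{-1}\cdot N\eps \cdot (\text{lower order}) \sim$ something $\lesssim N^{5/6+\delta} + \langle\cN\rangle$. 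For the cubic terms one uses Cauchy-Schwarz to split off an $a_x$ or $a^*_x$ (contributing $\|\cN^{1/2}\xi_{N,t}\|$) and bounds the remaining $a^*a$ or $aa$ part again by a Hilbert-Schmidt norm of a kernel involving $1/|x-y|$ and $\omega_{N,t}$; splitting the singularity as $|x-y|^{-1} = |x-y|^{-1}\mathbf{1}_{|x-y|\le \eps^\beta} + |x-y|^{-1}\mathbf{1}_{|x-y|>\eps^\beta}$ for an optimized $\beta$, the short-range part is handled by $L^2$-boundedness of the Coulomb kernel near the diagonal against the kinetic energy (using $\tr(-\eps^2\Delta)\omega_{N,t}\leq CN$, whose propagation along (\ref{eq:HF}) must be checked), and the long-range part by the $L^1$ and $L^p$ bounds on $\rho_{|[x,\omega_{N,t}]|}$. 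Collecting everything gives a differential inequality of the form $\tfrac{d}{dt}\langle\cN\rangle_t \leq C\langle\cN\rangle_t + C N^{5/6+\delta}$, and Gr\"onwall on $[0,T]$ finishes the proof.

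I would organize the proof as: (i) derive the generator $\cL_N(t)$ and identify the non-number-conserving terms; (ii) establish a priori control of $\tr(-\eps^2\Delta)\omega_{N,t}$ along the Hartree-Fock flow; (iii) estimate the quadratic terms via Hilbert-Schmidt bounds and the $\|\rho_{|[x_i,\omega_{N,t}]|}\|_p$ hypothesis; (iv) estimate the cubic terms via Cauchy-Schwarz, the splitting of the Coulomb singularity, the kinetic energy bound, and the $L^1/L^p$ hypotheses; (v) assemble the differential inequality and apply Gr\"onwall.

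The main obstacle I expect is step (iv): controlling the cubic interaction terms with a genuinely singular Coulomb potential. With a regular potential satisfying (\ref{eq:whV}) one simply uses $\|\widehat V\|_{L^1}$, but here one must carefully balance the short-distance cutoff $\eps^\beta$ so that the near-diagonal Coulomb piece is absorbed by the semiclassical kinetic energy $\eps^2\nabla$ (costing powers of $\eps$) while the far piece is absorbed by $N\eps$ factors from the commutator bounds — and the two error contributions, together with the $\eps^{-1}$ from the generator, must conspire to give exactly $N^{5/6+\delta}$ rather than something worse. Getting the exponent $5/6$ (and, correspondingly, $5/12$ after taking square roots in Theorem \ref{thm:FS}) sharp is delicate and is precisely what forces the threshold $p>5$ in the hypothesis; any looser estimate on the singularity would degrade the final rate.
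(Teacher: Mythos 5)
Your skeleton (compute the generator of the fluctuation dynamics, keep only the non-number-conserving terms in $\tfrac{d}{dt}\langle \cN\rangle$, propagate the kinetic energy along the Hartree--Fock flow, close a Gr\"onwall inequality with additive error $N^{5/6+\delta}$) coincides with the structure of the paper's proof, which takes the derivative identity (\ref{eq:ddt}) from \cite{BPS}. However, the central analytic step is missing from your plan. The paper's mechanism for handling the Coulomb singularity is the smooth Fefferman--de la Llave representation (\ref{eq:ff}), $|x-y|^{-1}=c\int_0^\infty r^{-5}\,dr\int dz\,\chi_{(r,z)}(x)\chi_{(r,z)}(y)$ with Gaussians $\chi_{(r,z)}$, combined with Lemma \ref{lm:tr1}: a pointwise-in-$z$ trace-norm bound $\|[\chi_{(r,z)},\omega_{N,t}]\|_{\mathrm{tr}}\leq C r^{3/2-3\delta}\sum_i\|\rho_{|[x_i,\omega_{N,t}]|}\|_1^{1/6+\delta}\bigl(\rho^*_{|[x_i,\omega_{N,t}]|}(z)\bigr)^{5/6-\delta}$ in terms of the Hardy--Littlewood maximal function. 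This is exactly what converts the hypothesis (\ref{eq:assFS-prop}) into control of the singular interaction, and the threshold $p>5$ enters when the maximal inequality is applied with two different exponents to make the $r$-integral converge both at $0$ and at $\infty$. Your step (iii)/(iv) statement that one can ``extract a commutator $[x_i,\omega_{N,t}]$ and bound a Hilbert--Schmidt norm by $\|\rho_{|[x_i,\omega_{N,t}]|}\|_p$ via HLS/weak-Young'' is a gesture, not an argument: what is needed is a trace-norm bound on commutators of $\omega_{N,t}$ with localized pieces of the potential, uniformly enough in the localization to survive the $z$- and scale-integrations, and producing that from $\rho_{|[x_i,\omega_{N,t}]|}\in L^1\cap L^p$ is the actual content of the proof; no equivalent substitute appears in your proposal.

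Your bookkeeping of which terms produce which errors is also off. There are no surviving quadratic off-diagonal terms: they cancel exactly because $\omega_{N,t}$ solves the Hartree--Fock equation including the exchange term, so (\ref{eq:ddt}) contains only three quartic expressions. The two mixed terms (I and II) are bounded by $C\eps\,\langle \cN\rangle$ with no additive error, and the additive $N\eps^{1/2-\delta}=N^{5/6+\delta'}$ comes exclusively from the purely annihilating term $a(\bar v_{t,x})a(\bar v_{t,y})a(u_{t,y})a(u_{t,x})$. Its treatment needs a separate argument: Cauchy--Schwarz against $\rho_{N,t}^{1/2}$, the Lieb--Thirring bound on $\|\rho_{N,t}\|_{5/3}$ propagated by conservation of the Hartree--Fock energy (with Hardy--Littlewood--Sobolev to dominate the potential energy by the kinetic one), and an optimization of the scale cutoff in the Fefferman--de la Llave $r$-integral at $\kappa=\eps^{1/2}$. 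Your proposed splitting $|x-y|^{-1}\mathbf{1}_{|x-y|\leq\eps^\beta}+|x-y|^{-1}\mathbf{1}_{|x-y|>\eps^\beta}$ with ``$L^2$-boundedness of the Coulomb kernel near the diagonal'' does not by itself produce these factors of $N$ and $\eps$, so as written step (iv) would not close; the propagation of the kinetic energy you flag in step (ii) is indeed needed, but only as an ingredient of this specific estimate.
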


Remark that the proof of this proposition, which will be given in the next section, can be extended to show a similar bound for higher moments of the number of particles operator (these estimates are needed to establish the convergence of higher order reduced densities, as stated after Theorem \ref{thm:main}). Let us now show how Proposition \ref{prop:Gronwall} can be used to establish Theorem \ref{thm:FS}.

\begin{proof}[Proof of Theorem \ref{thm:FS}] We follow here the same argument used in \cite{BPS}. {F}rom (\ref{eq:gamma-FS}), we obtain 
\[ \begin{split} 
\gamma_{N,t}^{(1)}(x;y) &= \left\langle \Psi_{N,t}, a_y^*a_x \Psi_{N,t}\right\rangle \\
&= \left\langle  \cU_N (t) \xi_N, R_{\omega_{N,t}}^* a_y^* a_x R_{\omega_{N,t}} \cU_N (t) \xi_N\right\rangle.
\end{split} \]
Eq. (\ref{eq:bog}) implies that 
\begin{equation}\label{eq:22pr}
\begin{split}
\gamma_{N,t}^{(1)}(x;y) &= \left\langle  \cU_N (t;0) \xi_N, \left( a^*\left( u_{t,y}\right) + a\left(\bar{v}_{t,y}\right)\right) \left( a\left( u_{t,x}\right) + a^*\left(\bar{v}_{t,x}\right)\right)  \cU_N (t;0) \xi_N \right\rangle \\ 
&= \left\langle \UN  \xi_N, \left[ a^*\left( u_{t,y}\right) a\left( u_{t,x}\right) - a^*\left(\bar{v}_{t,x}\right) a\left(\bar{v}_{t,y}\right)+ \left\langle \bar{v}_{t,y}, \bar{v}_{t,x}\right\rangle \right. \right. \\
&\hspace{2cm} \left. \left.  +  a^*\left( u_{t,y}\right) a^*\left(\bar{v}_{t,x}\right)+  a\left(\bar{v}_{t,y}\right) a\left( u_{t,x}\right)  \right] \UN \xi_N\right\rangle ,
\end{split}
\end{equation}
where we introduced the short-hand notation $u_{t,x} (z) = u_{N,t} (x;z)$ and $\overline{v}_{t,y} (z) = \overline{v}_{N,t} (y;z)$  for the kernels of $u_{N,t} = 1-\omega_{N,t}$ and $v_{N,t} = \sum_{j=1}^N |\overline{f}_j \rangle \langle f_j|$ if $\omega_{N,t} = \sum_{j=1}^N |f_j \rangle \langle f_j|$. Notice that then 
\[ \left\langle \bar{v}_{t,y}, \bar{v}_{t,x}\right\rangle = \int \rd z \; v_{N,t} \left(z;y\right) \bar{v}_{N,t} (z;x)= \left( v_{N,t} \bar{v}_{N,t} \right)\left(y;x\right) = \omega_{N,t}\left(x;y\right) \, .\]
This leads to 
\begin{align}
\gamma_{N,t}^{(1)}\left(x;y\right) &-\omega_{N,t}\left(x;y\right) \\ &= \left\langle \UN  \xi_N, \left[ a^*\left( u_{t,y}\right) a\left( u_{t,x}\right) - a^*\left(\bar{v}_{t,x}\right) a\left(\bar{v}_{t,y}\right)\right. \right. \notag\\
&\hspace{1cm} \left. \left.  +  a^*\left( u_{t,y}\right) a^*\left(\bar{v}_{t,x}\right)+  a\left(\bar{v}_{t,y}\right) a\left( u_{t,x}\right)  \right] \UN \xi_N\right\rangle.\label{eq:gamma-w}
\end{align}

Let $J$ be a Hilbert-Schmidt operator on $L^2 \left( \mathbb{R}^3 \right) $. Integrating its kernel against the difference \eqref{eq:gamma-w}, we find
\begin{equation}\label{eq:trOdiff} 
\begin{split} \tr \; J &\left( \gamma_{N,t}^{(1)}-\omega_{N,t}\right) \\ =& \left\langle \xi_N, \UN^* \left( \rd \Gamma \left( u_{N,t} J u_{N,t} \right) - \rd \Gamma \left( \bar{v}_{N,t}\overline{J^*}v_{N,t}\right) \right) \UN \xi_N \right\rangle \\
&+ 2\Re \big\langle\xi_N, \UN^* \big( \int dr_1 dr_2 \,\left( v_{N,t} J u_{N,t} \right) \left(r_1,r_2\right) a_{r_1}a_{r_2} \big) \UN \xi_N \big\rangle.
\end{split} \end{equation}
Using Lemma \ref{lemma:bounds} and $\|u_{N,t}\|=\|v_{N,t}\|=1$, we find 
\begin{equation*}
\begin{split}
\big\vert \tr \; J \, \big( \gamma_{N,t}^{(1)} &-\omega_{N,t}\big) \big\vert \\ \leq& \left( \|u_{N,t} J u_{N,t}\|_{\mathrm{HS}}+\|\bar{v}_{N,t}\overline{J^*}v_{N,t}\|_{\mathrm{HS}}\right)\left\| \left(\mathcal{N}+1\right)^{1/2} \UN \xi_N \right\|   \\
&\quad + 2 \; \| v_{N,t} J u_{N,t}  \|_{\mathrm{HS}} \left\| \left(\mathcal{N}+1\right)^{1/2} \UN \xi_N \right\| \\ 
\leq& C \| J \|_{\mathrm{HS}} \left\| \left(\mathcal{N}+1\right)^{1/2} \UN \xi_N \right\|. 
\end{split} 
\end{equation*}
By duality, this implies that 
\[ \| \gamma_{N,t}^{(1)} - \omega_{N,t} \|_\text{HS} \leq C \left\| \left(\mathcal{N}+1\right)^{1/2} \UN \xi_N \right\|\,. \]
With Proposition \ref{prop:Gronwall} we conclude that 
\[ \sup_{t \in [0;T]} \| \gamma_{N,t}^{(1)} - \omega_{N,t} \|_\text{HS} \leq C \left[ N^{\alpha/2} + N^{5/12 + \delta} \right] \]
for any $\delta > 0$. 

Finally, we prove the trace class bound (\ref{eq:tr-bd}). Starting from (\ref{eq:trOdiff}) we find, for any compact operator $J$ on $L^2 (\bR^3)$, 
\begin{align*}
\big\vert \tr \; J \, \big( \gamma_{N,t}^{(1)}- &\omega_{N,t}\big) \big\vert \\ \leq& \left( \|u_{N,t} J u_{N,t}\|+\|\bar{v}_{N,t}\overline{J^*}v_{N,t}\| \right) \left\langle \xi_N, \UN^* \mathcal{N}\UN \xi_N\right\rangle \notag\\
&\quad + 2 \; \| v_{N,t} J u_{N,t}  \|_{\mathrm{HS}} \left\| \left(\mathcal{N}+1\right)^{1/2} \UN \xi_N \right\|  \; \| \xi_N \|\notag\\
\leq& 2 \|J\|\left\| \left(\mathcal{N}+1\right)^{1/2}\UN \xi_N\right\|^2 \notag\\
&\quad + 2 \; \| v_{N,t} J u_{N,t}  \|_{\mathrm{HS}} \left\| \left(\mathcal{N}+1\right)^{1/2} \UN \xi_N \right\|  \; \| \xi_N \|\notag\\
\leq& 2 \|J\|\left\| \left(\mathcal{N}+1\right)^{1/2}\UN \xi_N\right\| \notag\\
&\quad + 2 \; \| J \| \; \| v_{N,t} \|_{\mathrm{HS}} \left\| \left(\mathcal{N}+1\right)^{1/2} \UN \xi_N \right\|  \; \| \xi_N \|.\notag
\end{align*}
From Proposition \ref{prop:Gronwall} and $\| v_{N,t} \|_{\mathrm{HS}} \leq N^{\frac12}$, we obtain that, for every $\delta > 0$ there exists $C > 0$ such that 
\begin{align}
\tr \; \left| \gamma_{N,t}^{(1)}-\omega_{N,t}\right| \leq C \left[ N^{\alpha} + N^{11/12+\delta} \right] \, . 
\end{align}
This completes the proof of Theorem \ref{thm:FS}.
\end{proof}

\section{Control of the Fluctuations}

The goal of this section is to show Proposition \ref{prop:Gronwall}. 
To reach this goal, we derive a differential inequality for the expectation $\langle \cU_N (t) \xi_N , \cN \cU_N (t) \xi_N \rangle$ and we apply Gronwall's lemma. We have 
\begin{equation}\label{eq:ddt}
\begin{split}
i \varepsilon \frac{\rd}{\rd t}  &\left\langle \mathcal{U}_N\left(t;0\right) \xi_N , \mathcal{N} \mathcal{U}_N\left(t;0\right)\xi_N \right\rangle \\
= \; &\frac{4 \ri}{N} \mathrm{Im} \int \rd x \, \rd y \; \frac{1}{|x-y|} \\
&\times \bigg\lbrace \left\langle \UN \xi_N , a^*\left(u_{t,x}\right)a\left(\bar{v}_{t,y}\right) a\left(u_{t,y}\right)a\left(u_{t,x}\right) \UN \xi_N \right\rangle \\
& \hspace{.5cm} +  \left\langle \UN \xi_N , a^*\left(u_{t,y}\right)a^*\left(\bar{v}_{t,y}\right) a^*\left(\bar{v}_{t,x}\right)a\left(\bar{v}_{t,x}\right) \UN \xi_N \right\rangle \\
& \hspace{.5cm} + \left\langle \UN \xi_N, a\left(\bar{v}_{t,x}\right)a\left(\bar{v}_{t,y}\right) a\left(u_{t,y}\right)a\left(u_{t,x}\right) \UN \xi_N \right\rangle \bigg\rbrace, 
\end{split}
\end{equation}
where, as in the last section, we use the short-hand notation $u_{t,x} (z) = u_{N,t} (x;z)$, $v_{t,x} (z) = v_{N,t} (x;z)$, with the operators $u_{N,t} = 1-\omega_{N,t}$ and $v_{N,t}$ as defined after (\ref{eq:22pr}). The proof of (\ref{eq:ddt}) is a lengthy but straightforward  computation that can be found in \cite[Proof of Proposition 3.3]{BPS}.

Next, we estimate the three contribution on the r.h.s. of (\ref{eq:ddt}) separately. We start with the term 
\begin{equation}\label{eq:I} \text{I} = \frac{1}{N} \int dx dy \frac{1}{|x-y|} \langle \cU_N (t;0) \xi_N, a^* (u_{t,x}) a (\overline{v}_{t,y}) a (u_{t,y}) a(u_{t,x}) \, \cU_N (t;0) \xi_N \rangle\,. 
\end{equation}

To bound this contribution (and later also to control the other two terms on the r.h.s of (\ref{eq:ddt})), we use a smooth version of the Fefferman-de la Llave representation of the Coulomb potential  \cite{FDL}, given by 
\begin{equation}\label{eq:ff} \frac{1}{|x-y|} = \frac{4}{\pi^2}  \int_0^\infty \frac{dr}{r^5} \int dz \, \chi_{(r,z)} (x) \chi_{(r,z)} (y) 
\end{equation}
where we introduced the notation $\chi_{(r,z)} (x) = e^{-(x-z)^2/r^2}$. The proof of (\ref{eq:ff}) is a simple computation with Gaussian integrals which we leave to the reader (the fact that the result of the integral is proportional to $|x-y|^{-1}$, which is the only property we are going to use, follows by simple scaling). Inserting (\ref{eq:ff}) into (\ref{eq:I}) we find 
\begin{equation}\label{eq:I1} \begin{split}
\text{I} = \; &\frac{C}{N} \int dx dy \; \int_0^\infty \frac{\rd r}{r^5} \int \rd z \; \chi_{\left(r,z\right)}(x) \chi_{\left(r,z\right)}(y)  \\ &\hspace{1cm} \times \langle \UN \xi_N,  a^*\left(u_{t,x}\right) a\left( \bar{v}_{t,y}\right) a\left(u_{t,y}\right) a\left( u_{t,x}\right)\UN \xi_N \rangle \\ 
=\; & \frac{C}{N} \int_0^\infty \frac{\rd r}{r^5} \int \rd z \; \rd x\; \chi_{\left(r,z\right)}(x) \\ &\hspace{1cm} \times \langle \UN \xi_N ,  a^*\left(u_{t,x}\right) B_{r,z} a\left( u_{t,x}\right)\UN \xi_N\rangle , 
\end{split}
\end{equation}
where we defined the operator \begin{equation}\label{eq:Brz} B_{r,z} = \int \rd y \; a\left( \bar{v}_{t,y}\right) \chi_{\left(r,z\right)}(y)  a\left(u_{t,y}\right) = \int ds_1 ds_2 (\overline{v}_{N,t} \chi_{(r,z)} u_{N,t}) (s_1; s_2) a_{s_1} a_{s_2}\,.
\end{equation}
Lemma \ref{lemma:bounds} implies that 
\begin{equation}\label{eq:normB} \| B_{r,z} \| \leq 2\| \overline{v}_{N,t} \chi_{(r,z)} u_{N,t} \|_\text{tr} \leq 2\left\| \left[ \chi_{(r,z)} , \omega_{N,t} \right] \right\|_\text{tr}\,. \end{equation}
To bound the r.h.s., we use the next lemma, whose proof is deferred to the end of the section.
\begin{lemma}
\label{lm:tr1}
Let $\chi_{r,z} (x) = \exp (-x^2/r^2)$. Then, for all $0 < \delta <1/2$ there exists $C > 0$ such that the pointwise bound
\begin{equation}\label{eq:tr1} \left\| [\chi_{(r,z)} , \omega_{N,t} ] \right\|_\text{tr} 
\leq C \, r^{\frac{3}{2} - 3\delta} \sum_{i=1}^3 \| \rho_{|[x_i, \omega_{N,t}]|} \|_1^{\frac{1}{6}+\delta} \left( \rho^*_{|[x_i, \omega_{N,t}]|} (z) \right)^{\frac{5}{6} - \delta}
\end{equation}
holds true. Here $\varrho^*_{|[x_i,\omega_{N,t}]|}$ denotes the Hardy-Littlewood maximal function defined by 
\begin{equation}\label{eq:max-def} \rho^*_{|[x_i , \omega_{N,t}]|} (z) = \sup_{B : z \in B} \frac{1}{|B|} \int_B dx\,\rho_{|[x_i , \omega_{N,t}]|} (x)  \end{equation}
with the supremum taken over all balls $B \in \bR^3$ such that $z \in B$. 
\end{lemma}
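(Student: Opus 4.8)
The plan is to reduce the commutator of $\omega_{N,t}$ with the Gaussian $\chi_{(r,z)}$ to commutators with the multiplication operators $x_i$, and then to deduce \eqref{eq:tr1} by interpolating, for each such component, a crude estimate (using only that $\chi_{(r,z)}$ is Lipschitz on scale $r$) against a sharp one (using that $\chi_{(r,z)}$ is spatially localized on scale $r$ around $z$). First I would use the fundamental theorem of calculus, $\chi_{(r,z)}(x)-\chi_{(r,z)}(y)=\int_0^1\nabla\chi_{(r,z)}\big((1-s)y+sx\big)\cdot(x-y)\,\rd s$, to write
\[ [\chi_{(r,z)},\omega_{N,t}]=\sum_{i=1}^3 T_i, \qquad T_i=\int_0^1 \rd s\;\Theta_{i,s}\circ[x_i,\omega_{N,t}], \]
where $\Theta_{i,s}\circ A$ is the operator with integral kernel $(\partial_i\chi_{(r,z)})\big((1-s)y+sx\big)\,A(x;y)$. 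Passing to Fourier variables in $\partial_i\chi_{(r,z)}$ shows that the kernel $(x,y)\mapsto(\partial_i\chi_{(r,z)})((1-s)y+sx)$ is a Schur multiplier of norm at most $\|\widehat{\partial_i\chi_{(r,z)}}\|_1\le Cr^{-1}$, which gives the crude bound $\|T_i\|_{\tr}\le Cr^{-1}\|\rho_{|[x_i,\omega_{N,t}]|}\|_1$.

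The heart of the proof is the complementary localized bound $\|T_i\|_{\tr}\le Cr^2\,\rho^*_{|[x_i,\omega_{N,t}]|}(z)$. To establish it I would combine several ingredients. (i) The spectral decomposition of the anti-self-adjoint operator, $i[x_i,\omega_{N,t}]=\sum_k\lambda_k^{(i)}|\psi_k^{(i)}\rangle\langle\psi_k^{(i)}|$, for which $\sum_k|\lambda_k^{(i)}|=\tr\,|[x_i,\omega_{N,t}]|=\|\rho_{|[x_i,\omega_{N,t}]|}\|_1$ and $\sum_k|\lambda_k^{(i)}|\,|\psi_k^{(i)}(x)|^2=\rho_{|[x_i,\omega_{N,t}]|}(x)$; this last identity is precisely what turns Hilbert-Schmidt and trace estimates of the individual pieces into $L^1$- and $L^2$-integrals of $\rho_{|[x_i,\omega_{N,t}]|}$. (ii) The fact that $\omega_{N,t}$ is an orthogonal projection, which yields the off-diagonal decomposition $[x_i,\omega_{N,t}]=\omega_{N,t}[x_i,\omega_{N,t}](1-\omega_{N,t})+(1-\omega_{N,t})[x_i,\omega_{N,t}]\omega_{N,t}$ and permits the use of $\int|\omega_{N,t}(x;y)|^2\,\rd y=\omega_{N,t}(x;x)$. (iii) The kinetic energy bound $\tr\,(-\eps^2\Delta)\,\omega_{N,t}\le CN$, which stays valid on $[0;T]$, confines $\omega_{N,t}$ to momenta of order $\eps^{-1}$, and thereby bounds the effective rank of $T_i$ by order $r^3N$, so that $\|T_i\|_{\tr}\le(\mathrm{rank}\,T_i)^{1/2}\|T_i\|_{\text{HS}}$ can be used efficiently. (iv) The passage from a Gaussian average to the maximal function: using $|\partial_i\chi_{(r,z)}(x)|^2\le Cr^{-2}e^{-(x-z)^2/(2r^2)}$, decomposing dyadically into the shells $|x-z|\sim 2^kr$, $k\ge 0$, and invoking the definition \eqref{eq:max-def} of $\rho^*$ gives $\int e^{-c(x-z)^2/r^2}\,\rho_{|[x_i,\omega_{N,t}]|}(x)\,\rd x\le Cr^3\,\rho^*_{|[x_i,\omega_{N,t}]|}(z)$, the relevant $k$-series converging because $\int_0^1(\log(1/\lambda))^{3/2}\,\rd\lambda<\infty$.

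The bound \eqref{eq:tr1} then follows by raising, for each $i$, the localized bound to the power $\tfrac56-\delta$ and the crude bound to the power $\tfrac16+\delta$ and taking the product: this produces exactly the powers $r^{\frac32-3\delta}$, $\|\rho_{|[x_i,\omega_{N,t}]|}\|_1^{\frac16+\delta}$ and $(\rho^*_{|[x_i,\omega_{N,t}]|}(z))^{\frac56-\delta}$, and summing over $i$ together with $\|[\chi_{(r,z)},\omega_{N,t}]\|_{\tr}\le\sum_i\|T_i\|_{\tr}$ concludes. The parameter $\delta>0$ is exactly the slack in this convex combination (and in the $r$-integration needed when the lemma is applied).

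I expect the genuinely hard part to be the localized bound, for two reasons. First, every estimate must be closed using only the density $\rho_{|[x_i,\omega_{N,t}]|}$, since no operator-norm control on $[x_i,\omega_{N,t}]$ is at our disposal; this rules out lossy steps such as replacing $\|M_{\chi_{(2r,z)}}[x_i,\omega_{N,t}]\|_{\text{HS}}$ by $\|M_{\chi_{(2r,z)}}[x_i,\omega_{N,t}]\|_{\tr}$, and forces one to exploit the off-diagonal structure and the kinetic energy bound carefully in order to keep every power of $N$ sharp. Second, in the fundamental theorem of calculus above the gradient $\nabla\chi_{(r,z)}$ is evaluated at the interpolated point $(1-s)y+sx$, so for $s\in(0,1)$ the corresponding kernel genuinely couples the two variables and is not a one-sided multiplication operator; this I would handle by using the Gaussian decay of $\chi_{(r,z)}$ to discard the super-exponentially small contribution of those $(x,y)$ for which $(1-s)y+sx$ lies far from $z$, and to distribute the remaining localization between the $x$- and the $y$-variable, so reducing the analysis essentially to the endpoints $s\in\{0,1\}$.
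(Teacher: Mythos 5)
Your overall architecture is the right one in spirit: reduce to the commutators $[x_i,\omega_{N,t}]$, prove a crude bound and a localized bound for each piece, and interpolate; your crude estimate $\|T_i\|_{\mathrm{tr}}\le C r^{-1}\|\rho_{|[x_i,\omega_{N,t}]|}\|_1$ via the Fourier/Schur-multiplier argument is correct, and the exponent bookkeeping $(-1)(\tfrac16+\delta)+2(\tfrac56-\delta)=\tfrac32-3\delta$ indeed matches \eqref{eq:tr1}. The genuine gap is the localized endpoint $\|T_i\|_{\mathrm{tr}}\le Cr^2\rho^*_{|[x_i,\omega_{N,t}]|}(z)$, which you never prove and which is at least as strong as the lemma itself (interpolating it against the crude bound would give \eqref{eq:tr1} with $\delta=0$, which even the paper does not obtain). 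Concretely: (a) for $s\in(0,1)$ your multiplier $(\partial_i\chi_{(r,z)})((1-s)y+sx)$ localizes only the combination $(1-s)y+sx$, and this does not localize $x$ or $y$ separately unless one also controls $x-y$; no off-diagonal decay of the kernel of $[x_i,\omega_{N,t}]$ is available from the hypotheses (only trace-class information encoded in $\rho_{|[x_i,\omega_{N,t}]|}$), so the proposed reduction ``essentially to the endpoints $s\in\{0,1\}$'' does not go through. (b) Even at the endpoints the multiplier is one-sided, e.g.\ $M_{\partial_i\chi_{(r,z)}}[x_i,\omega_{N,t}]$ at $s=1$; writing $A=U|A|$ and using $\|M_g A\|_{\mathrm{tr}}\le\|M_g|A^*|^{1/2}\|_{\mathrm{HS}}\,\||A|^{1/2}\|_{\mathrm{HS}}$ one can localize only one of the two Hilbert--Schmidt factors, which yields at best $Cr^{1/2}\|\rho_{|[x_i,\omega_{N,t}]|}\|_1^{1/2}(\rho^*_{|[x_i,\omega_{N,t}]|}(z))^{1/2}$, and no convex combination of this with the crude bound reaches the power $(\rho^*)^{5/6-\delta}$. (c) The ``effective rank $\sim r^3N$'' step is not a rigorous estimate ($T_i$ has infinite rank), and it would inject $N$- and $\eps$-dependent quantities into an inequality whose constant must depend only on $\delta$; note the lemma makes no use of $\tr\,(-\eps^2\Delta)\omega_{N,t}\le CN$.

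The paper's proof circumvents exactly this difficulty by a different decomposition: instead of the fundamental theorem of calculus along the segment, it interpolates multiplicatively in the Gaussian exponent, $\chi_{(r,z)}(x)-\chi_{(r,z)}(y)=\int_0^1\frac{\rd}{\rd s}\big[e^{-s(x-z)^2/r^2}e^{-(1-s)(y-z)^2/r^2}\big]\,\rd s$, so that for every $s$ the resulting operator is a genuinely factorized sandwich $M_{g_s}\,[\tfrac{(x-z)^2}{r^2},\omega_{N,t}]\,M_{h_s}$ with Gaussian weights of widths $r/\sqrt{s}$ and $r/\sqrt{1-s}$ on the two sides. Expanding $(x-z)^2$ into the $[x_i,\omega_{N,t}]$, using the spectral decomposition and Cauchy--Schwarz then converts each $s$-slice into two-sided Gaussian averages of $\rho_{|[x_i,\omega_{N,t}]|}$, which are controlled by the maximal function. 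The price is that the widths degenerate as $s\to0$ or $s\to1$, so the fully localized bound is not integrable in $s$; the interpolation with $\|\rho_{|[x_i,\omega_{N,t}]|}\|_1$ must therefore be performed inside the $s$-integral (with exponent $\alpha=2/3-2\delta$), and this is precisely the origin of the $\delta$-loss in \eqref{eq:tr1}. To repair your argument you would have to replace the straight-line Taylor representation by such a factorized interpolation (or otherwise produce two-sided localization uniformly in the interpolation parameter), rather than relying on the unproved endpoint bound $Cr^2\rho^*_{|[x_i,\omega_{N,t}]|}(z)$.
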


Applying (\ref{eq:tr1}) to the r.h.s. of (\ref{eq:normB}) and using the assumption (\ref{eq:assFS-prop}), we conclude from (\ref{eq:I1}) that, for all $\delta > 0$ there exists $C > 0$ such that 
\[ \begin{split} 
|\text{I}| &\leq C \frac{(N\eps)^{1/6+\delta}}{N} \sum_{i=1}^3  \int_0^\infty \frac{\rd r}{r^{7/2+3\delta}} \; \int\rd x  \rd z \;  \chi_{\left(r,z\right)} (x)  \; \left(\varrho^*_{|[x_i,\omega_{N,t}]|}(z)\right)^{\frac{5}{6}-\delta} \\ &\hspace{6cm} \times  \| a\left( u_{t,x}\right)\UN \xi_N \|^2  \\
&\leq C \frac{(N\eps)^{1/6+\delta}}{N} \sum_{i=1}^3 \int_0^\infty \frac{\rd r}{r^{7/2+3\delta}} \; \int\rd x  \, g_{i,r} (x) \, \| a\left( u_{t,x}\right)\UN \xi_N \|^2 \end{split} \]
where we defined \begin{equation}\label{eq:grdef} g_{i,r} (x) = \int dz \, \chi_{(r,z)} (x) \left(\varrho^*_{|[x_i,\omega_{N,t}]|}(z)\right)^{5/6-\delta}\,. 
\end{equation}
We find
\[ \begin{split} 
|\text{I}| \leq C \frac{(N\eps)^{1/6+\delta}}{N}  \sum_{i=1}^3 \int_0^\infty &\frac{\rd r}{r^{7/2+3\delta}}  \\ &\times \langle \UN \xi_N ,  d\Gamma (u_{N,t} g_{i,r} (x) u_{N,t}) \, \UN \xi_N \rangle 
\end{split} \]
where $u_{N,t} g_{i,r} (x) u_{N,t}$ is the operator with the integral kernel 
\[ (u_{N,t} g_{i,r} (x)u_{N,t}) (s_1; s_2) = \int dx\,u_{N,t} (s_1 ;x) g_{i,r} (x) u_{N,t} (x;s_2)\,. \]
Applying again Lemma \ref{lemma:bounds} and using the fact that $\| u_{N,t} \| \leq 1$, we obtain
\begin{equation}\label{eq:Ibd} 
|\text{I}| \leq C  \frac{(N\eps)^{1/6+\delta}}{N}  \sum_{i=1}^3 \int_0^\infty \frac{dr}{r^{7/2+3\delta}} \, \| g_{i,r} \|_\infty \| \cN^{1/2} \UN \xi_N \|^2\,. \end{equation}
We have, using the Hardy-Littlewood maximal inequality, 
\begin{equation}\label{eq:grin} \| g_{i,r} \|_\infty \leq r^{3/p} \, \| \varrho^*_{|[x_i,\omega]|} \|_{(5/6-\delta)q}^{5/6-\delta} \leq C r^{3/p} \| \varrho_{|[x_i,\omega]|} \|_{(5/6-\delta)q}^{5/6-\delta} \end{equation}
for any $(5/6-\delta)^{-1} <  q \leq \infty$ and $p$ such that $p^{-1} + q^{-1} = 1$. To bound the r.h.s. of (\ref{eq:Ibd}), we divide the $r$-integral into two parts and then we apply (\ref{eq:grin}) with two different choices of $p,q$. {F}rom the assumption (\ref{eq:assFS-prop}) we can find $q_1 > 6$ and $q_2 < 6$ and $\delta > 0$ sufficiently small such that  
\[ \sup_{t \in [0;T]} \| \rho_{|[x_i,\omega_{N,t}]|} \|_{q_1(5/6-\delta)} + \| \rho_{|[x_i,\omega_{N,t}]|} \|_{q_2(5/6-\delta)} \leq C N\eps\,.  \]
With this choice of $q_1$, $q_2$, we have $p_1 < 6/5$ and $p_2 > 6/5$ which implies (possibly after reducing again the value of $\delta > 0$) that $r^{-7/2 -3\delta + 3/p_1}$ is integrable close to zero and that $r^{-7/2-3\delta + 3/p_2}$ is integrable at infinity. We conclude that
\begin{equation}\label{eq:If} | \text{I} | \leq C \eps \, \| \cN^{1/2} \UN \xi_N \|^2 = C \eps \, \langle \UN \xi_N, \cN \UN \xi_N \rangle 
\end{equation}
for all $t \in [0;T]$. 

The second term on the l.h.s. of \eqref{eq:ddt} can be estimated similarly. Recalling the definition (\ref{eq:Brz}) of the operator $B_{r,z}$, we can write
\[ \begin{split} \text{II} &= \frac{C}{N} \int dx dy \frac{1}{|x-y|} \, \langle \UN \xi_N , a^* (\overline{v}_{t,x}) a^* (u_{t,y}) a^* (\overline{v}_{t,y}) a(\overline{v}_{t,x}) \UN \xi_N \rangle 
\\ &= \frac{C}{N} \int dx dy \int_0^\infty \frac{dr}{r^5} \int dz \chi_{r,z} (x) \chi_{r,z} (y) \\ &\hspace{2cm} \times \langle \UN \xi_N , a^* (\overline{v}_{t,x}) a^* (u_{t,y}) a^* (\overline{v}_{t,y}) a(\overline{v}_{t,x}) \UN \xi_N \rangle  \\
&= \frac{C}{N} \int dx \int_0^\infty \frac{dr}{r^5} \int dz \chi_{r,z} (x) \\ &\hspace{2cm} \times \langle \UN \xi_N , a^* (\overline{v}_{t,x})  B^*_{r,z} a (\overline{v}_{t,x}) \UN \xi_N \rangle \end{split} \]
which implies, with (\ref{eq:normB}), (\ref{eq:tr1}), the assumptions (\ref{eq:assFS-prop}) and (\ref{eq:grdef}) that, for $\delta >0$ small enough, 
\[ \begin{split} |\text{II}| &\leq \frac{C (N\eps)^{1/6+\delta}}{N} \sum_{i=1}^3  \int_0^\infty \frac{dr}{r^{7/2+3\delta}} \int dx dz \, \chi_{r,z} (x) \left( \rho^*_{|[x_i,\omega]|} (z) \right)^{5/6-\delta} \\ &\hspace{5cm} \times  \left\| a (\overline{v}_{t,x}) \UN \xi_N \right\|^2 \\
&\leq \frac{C (N\eps)^{1/6+\delta}}{N} \sum_{i=1}^3 \int_0^\infty \frac{dr}{r^{7/2+3\delta}}   \\ &\hspace{3cm} \times \langle \UN \xi_N, d\Gamma (\overline{v}_{N,t} g_{i,r} (x) \overline{v}_{N,t})  \UN \xi_N \rangle \\ 
&\leq \frac{C (N\eps)^{1/6+\delta}}{N} \sum_{i=1}^3  \int_0^\infty \frac{dr}{r^{7/2+3\delta}}   \| g_{i,r} \|_\infty \,  \left\| \cN^{1/2}  \UN \xi_N \right\|^2\,. 
 \end{split} \]
Then we conclude as we did for (\ref{eq:Ibd}) that 
\begin{equation}\label{eq:IIf} | \text{II} | \leq C \eps \, \langle \UN \xi_N , \cN \UN \xi_N \rangle 
\end{equation}
for all $t \in [0;T]$.

Finally, we consider the third term on the r.h.s. of  \eqref{eq:ddt}. Again we use the Fefferman-de la Llave formula \eqref{eq:ff} for the Coulomb potential. We obtain
\[\begin{split}
\mathrm{III} &= \frac{C}{N} \int \rd x\; \rd y \; \int_0^\infty \frac{\rd r}{r^5} \int \rd z \; \chi_{\left(r,z\right)}(x) \chi_{\left(r,z\right)}(y)  \\ &\hspace{1cm} \times  \langle \UN \xi_N,  a\left(\bar{v}_{t,x}\right) a\left( \bar{v}_{t,y}\right) a\left(u_{t,y}\right) a\left( u_{t,x}\right)\UN \xi_N \rangle\,. 
\end{split} \] 
We divide the $r$-intergral into two parts, setting $\text{III} = \text{III}_1 + \text{III}_2$, with 
\begin{equation}\label{eq:III12} \begin{split} \text{III}_1 =\; &\frac{C}{N} \int \rd x\; \rd y \; \int_0^\kappa \frac{\rd r}{r^5} \int \rd z \; \chi_{\left(r,z\right)}(x) \chi_{\left(r,z\right)}(y)  \\ &\hspace{.5cm} \times \langle \UN \xi_N,  a\left(\bar{v}_{t,x}\right) a\left( \bar{v}_{t,y}\right) a\left(u_{t,y}\right) a\left( u_{t,x}\right)\UN \xi_N \rangle\,, \\ \text{III}_2 =\; &\frac{C}{N} \int \rd x\; \rd y \; \int_\kappa^\infty \frac{\rd r}{r^5} \int \rd z \; \chi_{\left(r,z\right)}(x) \chi_{\left(r,z\right)}(y)  \\ &\hspace{.5cm} \times 
\langle \UN \xi_N,  a\left(\bar{v}_{t,x}\right) a\left( \bar{v}_{t,y}\right) a\left(u_{t,y}\right) a\left( u_{t,x}\right)\UN \xi_N \rangle\,.  \end{split} 
\end{equation}
We start estimating $\text{III}_1$. Here, we start by integrating over $z$. Since
\[ \int dz\, \chi_{(r,z)} (x) \chi_{(r,z)} (y)  = r^3 \chi_{(\sqrt{2} r, x)} (y) \]
we obtain, with (\ref{eq:Brz}), 
\[ \begin{split} \text{III}_1 &= \frac{C}{N} \int_0^\kappa \frac{dr}{r^2} \int dx dy \, \chi_{(\sqrt{2}r , x)} (y) \\ &\hspace{2cm} \times  \langle \UN \xi_N ,  a\left(\bar{v}_{t,x}\right) a\left( \bar{v}_{t,y}\right) a\left(u_{t,y}\right) a\left( u_{t,x}\right)\UN \xi_N \rangle \\ 
 &= \frac{C}{N} \int_0^\kappa \frac{dr}{r^2} \int dx  \, \langle \UN \xi_N, B_{\sqrt{2} r, x} a(\overline{v}_{t,x}) a (u_{t,x})
 \UN \xi_N \rangle\,.   \end{split} \]
Since $\| \overline{v}_{N,x} \|^2 = \omega_{N,t} (x;x) =: \rho_{N,t} (x)$, we find 
\begin{equation} 
\begin{split}
 | \text{III}_1 | &\leq \frac{C}{N} \int_0^\kappa \frac{dr}{r^2} \int dx  \, \| B_{\sqrt{2} r, x} \| \, \rho_{N,t}^{1/2} (x) \, \| a (u_{t,x}) \UN \xi_N \| \\
&\leq \frac{C}{N} \int_0^\kappa \frac{dr}{r^2} \int dx  \, \rho_{N,t}^{1/2} (x) \| [\chi_{(\sqrt{2}r,x)}, \omega_{N,t}] \|_\text{tr}    \| a (u_{t,x}) \UN \xi_N \|\,. 
\end{split} \end{equation}
Using the pointwise bound (\ref{eq:tr1}) and the assumption (\ref{eq:assFS-prop}), we obtain that, for all $\delta > 0$ sufficiently small, there exists a constant $C > 0$ such that 
\begin{equation}\label{eq:III1-a}
\begin{split} 
| \text{III}_1 | & \leq \frac{C (N\eps)^{\frac{1}{6} + \delta}}{N} \sum_{i=1}^3 \int_0^\kappa \frac{dr}{r^{1/2 +3\delta}} \int dx  \, \rho_{N,t}^{1/2} (x) \, \left[ \rho^*_{|[x_i, \omega_{N,t}]|} (x) \right]^{\frac{5}{6}-\delta}  \\ &\hspace{6cm} \times  \| a (u_{t,x}) \UN \xi_N \|\,.
\end{split} 
\end{equation}
Applying H\"older's inequality, we conclude that
\begin{equation}\label{eq:hold} \begin{split} 
| \text{III}_1 | & \leq \frac{C (N\eps)^{\frac{1}{6} + \delta} \kappa^{1/2 -3\delta}}{N} \sum_{i=1}^3 \| \rho_{N,t} \|_{5/3}^{1/2} \| \rho^*_{|[x_i, \omega_{N,t}]|} \|_{\frac{25}{6} - 5 \delta}^{\frac{5}{6} - \delta} \\ &\hspace{4cm} \times 
\left[ \int dx \, \| a(u_{t,x}) \UN \xi_N \|^2 \right]^{1/2}\,.
\end{split} \end{equation}
By the Hardy-Littlewood maximal inequality and the assumption (\ref{eq:assFS-prop}), we have
\begin{equation}\label{eq:HLmax} \| \rho^*_{|[x_i, \omega_{N,t}]|} \|_{\frac{25}{6} - 5 \delta} \leq  \| \rho_{|[x_i, \omega_{N,t}]|} \|_{\frac{25}{6} - 5 \delta} \leq C N\eps\,.
\end{equation}
Furthermore, we have 
\begin{equation}\label{eq:NUNxi} \begin{split} \int dx \, \| a(u_{t,x}) \UN \xi_N \|^2 &= \langle \UN \xi_N , d\Gamma (u_{N,t}) \UN \xi_N \rangle \\ &\leq \langle \UN \xi_N, \cN \UN \xi_N \rangle \\ &= \| \cN^{1/2} \UN \xi_N \|^2\,. \end{split} \end{equation}
On the other hand, to bound the norm $\| \rho_{N,t} \|_{5/3}$ we use the Lieb-Thirring inequality, which implies   
\[ \| \rho_{N,t} \|^{5/3}_{5/3} \leq \tr \, (-\Delta) \omega_{N,t} \leq \eps^{-2} \cE_\text{HF} (\omega_{N,t}) \]
with the Hartree-Fock energy
\[ \begin{split} \cE_\text{HF} (\omega_{N,t}) = \; & \tr \, (-\eps^2 \Delta) \omega_{N,t} \\ &+ \frac{1}{2N} \int \frac{1}{|x-y|} \left[ \omega_{N,t} (x;x) \omega_{N,t} (y;y) - |\omega_{N,t} (x;y)|^2 \right] dx dy\,. \end{split} \]
By energy conservation, we have 
\begin{equation}\label{eq:rhoNT} \| \rho_{N,t} \|^{5/3}_{5/3} \leq \eps^{-2} \cE_\text{HF} (\omega_{N})\,. \end{equation}
Next, we remark that the potential part of $\cE_\text{HF} (\omega_{N})$ can be bounded by its kinetic energy. In fact, applying the Hardy-Littlewood-Sobolev inequality and interpolation and using the normalization $\| \rho_N \|_1 = N$ for $\rho_N (x) = \omega_N (x;x)$, we find 
\[ \begin{split} \frac{1}{N} \int \frac{1}{|x-y|} \rho_{N} (x) \rho_{N} (y) dx dy &\leq \frac{C}{N} \| \rho_{N} \|_{6/5}^2 \\ &\leq \frac{C}{N} \| \rho_{N} \|_1^{7/5} \| \rho_{N} \|_{5/3}^{3/5}
\\ &= C N^{2/5} \| \rho_{N} \|_{5/3}^{3/5} \\ &\leq C N + C N^{-2/3} \| \rho_{N} \|_{5/3}^{5/3}\, , \end{split} \]
by Young's inequality. {F}rom the Lieb-Thirring, we find 
\[  \frac{1}{N} \int \frac{1}{|x-y|} \rho_{N} (x) \rho_{N} (y) dx dy \leq C N + C \tr \, (-\eps^2 \Delta) \omega_{N} \]
and hence
\[ \cE_\text{HF} (\omega_{N}) \leq C N +  C \tr \, (-\eps^2 \Delta) \omega_{N} \leq C N \]
from the assumption $\tr \, (-\eps^2 \Delta) \omega_N \leq CN$ on the initial sequence of orthogonal projection $\omega_N$. {F}rom (\ref{eq:rhoNT}), we conclude that $\| \rho_{N,t} \|_{5/3} \leq N$. Combining this estimate with (\ref{eq:HLmax}) and  (\ref{eq:NUNxi}), we obtain 
\[ | \text{III}_1 | \leq C \sqrt{N} \eps \kappa^{1/2 -3\delta} \| \cN^{1/2} \UN \xi_N \| \leq \eps \|  \cN^{1/2} \UN \xi_N \|^2 + C N \eps \kappa^{1-6\delta} \]
for all $t \in [0;T]$. 

Next, we estimate the second term in (\ref{eq:III12}). With the definition (\ref{eq:Brz}), we have 
\begin{equation*}  |\text{III}_2 | \leq \; \frac{C}{N} \int_\kappa^\infty \frac{\rd r}{r^5} \int \rd z \; \| B_{r,z} \|^2 \leq \frac{C}{N} \int_\kappa^{\infty} \frac{\rd r}{r^5} \int \rd z \;  \| [ \chi_{(r,z)} , \omega_{N,t}] \|_\text{tr}^2  \, . 
\end{equation*} 
With the bound (\ref{eq:tr1}) and the assumption (\ref{eq:assFS-prop}), we obtain
\[ |\text{III}_2 | \leq \frac{C (N\eps)^2}{N}  \int_\kappa^\infty \frac{\rd r}{r^{2+6\delta}} \leq C N \eps^2 \kappa^{-1-6\delta}\,. \] 
Hence,
\[ |\text{III}| \leq \eps \|  \cN^{1/2} \UN \xi_N \|^2 + C N \eps \kappa^{1-6\delta} + 
C N \eps^2 \kappa^{-1-6\delta}\,.  \]
Minimizing over $\kappa$ we find $\kappa = \eps^{1/2}$ and we conclude
\[ |\text{III}| \leq \eps \|  \cN^{1/2} \UN \xi_N \|^2 + C N \eps^{3/2-6\delta}\,. \]

Combining this bound with (\ref{eq:If}) and (\ref{eq:IIf}), we obtain from (\ref{eq:ddt}) that, for every $\delta > 0$ small enough, there is a constant $C > 0$ such that  
\[ \left| \frac{d}{dt} \langle \UN \xi_N, \cN \UN \xi_N \rangle \right| \leq 
C  \langle \UN \xi_N,  \cN \UN \xi_N \rangle + C N \eps^{1/2-\delta} \]
for all $t \in [0;T]$. Gronwall's lemma implies that there exists a constant $C > 0$ such that 
\[ \sup_{t\in [0;T]} \langle\UN \xi_N, \cN \UN \xi_N \rangle \leq C \left[ \langle \xi_N , \cN \xi_N \rangle + N \eps^{1/2-\delta}  \right]\,. \]
This concludes the proof of Proposition \ref{prop:Gronwall}. We still have to show Lemma \ref{lm:tr1}.

\begin{proof}[Proof of Lemma \ref{lm:tr1}] 
The integral kernel of the commutator $[\chi_{(r,z)}, \omega_{N,t}]$ is 
\[ \begin{split}
[ \chi_{\left(r,z\right)} , \omega_{N,t}] (x;y) &= \left( \chi_{\left(r,z\right)}(x) -\chi_{\left(r,z\right)}(y) \right) \omega_{N,t}(x;y)  \\
&= \int_0^1 \rd s\; \frac{\rd}{\rd s} e^{-\frac{(x-z)^2}{r^2}s}\omega_{N,t} (x;y)e^{-\frac{(x-z)^2}{r^2}(1-s)}  \\
&=  - \int_0^1 \rd s \;e^{-\frac{(x-z)^2}{r^2}s} \left[ \frac{(x-z)^2}{r^2}, \omega_{N,t} \right](x;y) \, e^{-\frac{(y-z)^2}{r^2}(1-s)}\,. 
\end{split} \]
Hence
\begin{equation}\label{eq:comm1}
\begin{split}
[ &\chi_{\left(r,z\right)} , \omega_{N,t}] 
\\ &= - \int_0^1 \rd s\; \chi_{\left(r/\sqrt{s},z\right)} (x) \left[ \frac{(x-z)^2}{r^2}, \omega_{N,t} \right] \chi_{\left(r/\sqrt{1-s},z\right)} (x) \\  
&= - \sum_{i=1}^3 \int_0^1 \rd s\; \chi_{\left(r/\sqrt{s},z\right)} (x) \\ & \hspace{1cm} \times \left( \frac{(x-z)_i}{r^2}\left[ x_i , \omega_{N,t} \right] + \left[ x_i , \omega_{N,t} \right] \frac{(x-z)_i}{r^2} \right) \chi_{\left(r/\sqrt{1-s},z\right)}(x) \\
&= \sum_{i=1}^3 \mathrm{I}_i + \mathrm{II}_i 
\end{split}
\end{equation}
where, with an abuse of notation, we use $\chi_{(.,.)} (x)$ to denote both the function of $x$ and the corresponding multiplication operator. 

We focus on the first term on the r.h.s. of (\ref{eq:comm1}), for example fixing $i=1$. The other components of the first term, and the three components of the second term can then be treated similarly. We use the spectral decomposition of the commutator $\left[ x_1 ,\omega_{N,t} \right]$ (which, by assumption, is trace class for all $t \in [0;T]$), given by 
\[ [x_1 ,\omega_{N,t}] = i \sum_j \lambda_j | \ph_j \rangle \langle \ph_j| \]
for a sequence of eigenvalues $\lambda_j \in \bR$ and an orthonormal system $\ph_j$ in $L^2 (\bR^3)$ (we introduced $i=\sqrt{-1}$ on the r.h.s., because the commutator is anti self-adjoint). We find 
\[ \begin{split} 
\text{I}_1 &= \int_0^1 \rd s\; \chi_{\left(r/\sqrt{s},z\right)}(x) \frac{(x-z)_1}{r^2} \left[ x_1, \omega_{N,t} \right]\chi_{\left(r/\sqrt{1-s},z\right)}(x)\\
&= \frac{i}{r} \sum_j \lambda_j \int_0^1 \frac{\rd s}{\sqrt{s}} \; \bigg\vert \, \chi_{\left(r/\sqrt{s},z\right)} (x) \frac{(x-z)_1}{r/\sqrt{s}} \varphi_j \bigg\rangle \bigg\langle \chi_{\left(r/\sqrt{1-s},z \right)} (x) \varphi_j \bigg\vert
\end{split} \]
and therefore, since $\| |\ph \rangle \langle \psi| \|_\text{tr} = \| \ph \| \| \psi \|$, 
\begin{equation}\label{eq:tr1-1} \begin{split} 
\| \text{I}_1 \|_\text{tr} &\leq \frac{1}{r} \sum_j |\lambda_j| \int_0^1 \frac{ds}{\sqrt{s}} \left\| \chi_{(r/\sqrt{s},z)} (x) \frac{|x-z|}{r/\sqrt{s}} \ph_j \right\| \, \left\|  \chi_{(r/\sqrt{1-s},z)} (x) \ph_j \right\|  \\ 
&\leq \frac{1}{r} \int_0^1 \frac{ds}{\sqrt{s}} \left( \sum_j |\lambda_j| \left\| \chi_{(r/\sqrt{s},z)} (x) \frac{|x-z|}{r/\sqrt{s}} \ph_j \right\|^2 \right)^{1/2} \\ &\hspace{4cm}\times \left( \sum_j |\lambda_j| \left\| \chi_{(r/\sqrt{1-s},z)} (x) \ph_j \right\|^2 \right)^{1/2}\,.\end{split} \end{equation}
We compute
\begin{equation}\label{eq:max1} \begin{split} 
\sum_j |\lambda_j| \left\| \chi_{(r/\sqrt{1-s},z)} (x) \ph_j \right\|^2  &= \int dx \, e^{-2(1-s)(x-z)^2/r^2} \rho_{|[x,\omega_{N,t}]|} (x)\\ &\leq C \frac{r^3}{(1-s)^{3/2}} \,  \rho^*_{|[x,\omega_{N,t}]|} (z) \end{split} \end{equation} 
where $\rho^*_{|[x_i,\omega_{N,t}]|}$ is the Hardy-Littlewood maximal function associated with $\rho_{|[x_i,\omega_{N,t}]|}$. To prove (\ref{eq:max1}), we write
\[ \begin{split} e^{-2(1-s)(x-z)^2/r^2} &= \int_0^1 \chi (t \leq e^{-2(1-s)(x-z)^2/r^2}) dt \\ &= \int_0^1 \chi \left( |x-z| \leq \sqrt{\frac{r^2 \log (1/t)}{2(1-s)}} \right) dt  \end{split} \]
and, using Fubini, we find 
\[\begin{split} \int dx \, &e^{-2(1-s)(x-z)^2/r^2} \rho_{|[x,\omega_{N,t}]|} (x) \\ &= \int_0^1 dt \int dx \, \chi \left( |x-z| \leq \sqrt{\frac{r^2 \log (1/t)}{2(1-s)}} \right) \,
\rho_{|[x,\omega_{N,t}]|} (x) \\ &\leq C \frac{r^3}{(1-s)^{3/2}} \,  \rho^*_{|[x,\omega_{N,t}]|} (z)  \int_0^1  (\log (1/t))^{3/2} \\ &\leq C \frac{r^3}{(1-s)^{3/2}} \,  \rho^*_{|[x,\omega_{N,t}]|} (z) \end{split} \]
which shows (\ref{eq:max1}). Similarly to (\ref{eq:max1}), we also find 
\[ \sum_j |\lambda_j| \left\| \chi_{(r/\sqrt{s},z)} (x) \frac{|x-z|}{r/\sqrt{s}} \ph_j \right\|^2 \leq C \frac{r^3}{s^{3/2}} \rho^*_{|[x_1,\omega_{N,t}]|} (z)\,. \]
Combining this bound with the simpler estimate
\[ \sum_j |\lambda_j| \left\| \chi_{(r/\sqrt{s},z)} (x) \frac{|x-z|}{r/\sqrt{s}} \ph_j \right\|^2 \leq C\sum_j |\lambda_j|  = \| \rho_{|[x_1,\omega_{N,t}]|} \|_1 \]
we obtain 
\[ \sum_j |\lambda_j| \left\| \chi_{(r/\sqrt{s},z)} (x) \frac{|x-z|}{r/\sqrt{s}} \ph_j \right\|^2 \leq
C \frac{r^{3\alpha} \, \| \rho_{|[x_1, \omega_{N,t}]|} \|^{1-\alpha}_1 }{s^{3\alpha/2}} \, \left( \rho^*_{|[x_1, \omega_{N,t}]|} (z) \right)^{\alpha} \]
for any $0 \leq \alpha \leq 1$. Inserting the last bound and (\ref{eq:max1}) on the r.h.s. of (\ref{eq:tr1-1}) we conclude  
\[ \begin{split} \| \text{I}_1 \|_\text{tr} &\leq C r^{(1 + 3\alpha)/2} \| \rho_{|[x_1, \omega_{N,t}]|} \|_1^{(1-\alpha)/2} \left( \rho^*_{|[x_1, \omega_{N,t}]|} (z) \right)^{(1+\alpha)/2} \\ &\hspace{4cm} \times \int_0^1 ds\frac{1}{s^{1/2+3\alpha/4} (1-s)^{3/4}}\,. \end{split} \]
Hence, for all $\delta > 0$ we find (putting $\alpha = 2/3 -2\delta$)
\[ \| \text{I}_1 \|_\text{tr} \leq  C r^{3/2 - 3\delta} \| \rho_{|[x_1, \omega_{N,t}]|} \|_1^{1/6+\delta} \left( \rho^*_{|[x_1, \omega_{N,t}]|} (z) \right)^{5/6 -\delta}
\]
which concludes the proof of Eq. (\ref{eq:tr1}), and of Lemma \ref{lm:tr1}.
\end{proof}

{\it Acknowledgements.} The work of M.P. has been carried out thanks to the financial support of the NCCR SwissMAP. C.S  gratefully acknowledges support by the Forschungskredit UZHFK-15-108. B.S. is happy to acknowledge support from the Swiss National Science Foundation through the SNF Grant ``Effective equations from quantum dynamics''.


\begin{thebibliography}{10}

\bibitem{AKN} L. Amour, M. Khodja and J. Nourrigat.
The semiclassical limit of the time dependent Hartree-Fock equation: the Weyl symbol of the solution.
\emph{Anal. PDE} {\bf 6} (2013), no. 7, 1649--1674. 

\bibitem{APPP} A. Athanassoulis, T. Paul, F. Pezzotti, M. Pulvirenti. Strong Semiclassical Approximation of Wigner Functions for the Hartree Dynamics. {\it Rend. Lincei Mat. Appl.} {\bf 22} (2011), 525--552.

\bibitem{B}
V.~{Bach}. {Error bound for the {H}artree-{F}ock energy of atoms and
  molecules}. \emph{Comm. Math. Phys.} \textbf{147} (1992), no.~3, 527--548.

\bibitem{BBPPT} V. Bach, S. Breteaux, S. Petrat, P. Pickl, T. Tzaneteas. Kinetic energy estimates for the accuracy of the time-dependent Hartree-Fock approximation with Coulomb interaction. {\it J. Math. Pures Appl.} {\bf 105}(1) (2016), 1--30.
 
\bibitem{BGGM}
C.~{Bardos}, F.~{Golse}, A.~D.~{Gottlieb} and N.~J.~{Mauser}.
  {Mean-field dynamics of fermions and the time-dependent {H}artree-{F}ock
  equation}. \emph{J. Math. Pures Appl. (9)} \textbf{82} (2003), no.~6, 665--683.

\bibitem{BJPSS} N.~{Benedikter}, V.~Jaksic, M.~{Porta}, C. Saffirio and B.~{Schlein}. Mean-field Evolution of Fermionic Mixed States. {\it Commun. Pur. Appl. Math..} doi:10.1002/cpa.21598

\bibitem{BPSS} N.~{Benedikter}, M.~{Porta}, C. Saffirio and B.~{Schlein}. {From the Hartree dynamics to the Vlasov equation.} {\it Arch. Rational Mech. Anal.} {\bf 221} (2016), 273--334.

\bibitem{BPS} N.~{Benedikter}, M.~{Porta} and B.~{Schlein}. {Mean-field evolution of fermionic systems}. \emph{Comm. Math. Phys} \textbf{331} (2014), 1087--1131.

\bibitem{BPS2} N.~{Benedikter}, M.~{Porta} and B.~{Schlein}. {Mean-field dynamics of fermions with relativistic dispersion}. \emph{J. Math. Phys.} \textbf{55} (2014), 021901.

\bibitem{EESY}
A.~{Elgart}, L.~{Erd{\H{o}}s}, B.~{Schlein} and H.-T.~{Yau}. {Nonlinear {H}artree equation as the mean-field limit of weakly
  coupled fermions}. \emph{J. Math. Pures Appl. (9)} \textbf{83} (2004), no.~10,
  1241--1273.
 
\bibitem{FDL}
Ch.L. Fefferman and R. de la Llave. {Relativistic Stability of Matter - I}, \emph{Rev. Mat. Iberoam.} {\bf 2} (1986), no.~2, 119--213.

  
\bibitem{FK}
J.~{Fr{\"o}hlich} and A.~{Knowles}. {A microscopic derivation of the
  time-dependent {H}artree-{F}ock equation with {C}oulomb two-body
  interaction}, \emph{J. Stat. Phys.} \textbf{145} (2011), no.~1, 23--50.
  
\bibitem{GS}
G.M.~Graf and J.P.~Solovej. A correlation estimate with applications to quantum systems with Coulomb interactions. \emph{Rev. Math. Phys.} \textbf{6} (1994), 977--997.

\bibitem{L}
E.~H.~Lieb. Thomas-Fermi and related theories of atoms and molecules. Reviews in Modern Physics. {\bf 53} (1981), no. 4, 603--641.

\bibitem{LSi}
E.~H.~Lieb and B.~Simon. ￼The Thomas-Fermi theory of atoms, molecules and solids. \emph{Adv. Math.} \textbf{23} (1977), 22-116. 

\bibitem{LP}
P.-L. Lions and T. Paul. Sur les mesures de Wigner. {\it Rev. 
Mat. Iberoamericana} {\bf 9} (1993), 553--618. 

\bibitem{MM} P. A. Markowich and N. J. Mauser. The Classical Limit of a Self-Consistent Quantum Vlasov Equation. {\it Math. Models Methods Appl. Sci.} {\bf 3} (1993), no. 1, 109--124.

\bibitem{NS}
H.~{Narnhofer} and G.~L.~{Sewell}. {Vlasov hydrodynamics of a quantum
  mechanical model}. \emph{Comm. Math. Phys.} \textbf{79} (1981), no.~1, 9--24.

\bibitem{PP} S.~{Petrat} and P.~{Pickl}. {A new method and a new scaling for deriving fermionic mean-field dynamics}. {\it Math. Phys. Anal. Geom.} (2016) {\bf 19}: 3.

\bibitem{Solovej} J.~P.~Solovej. {{Many Body Quantum Mechanics}}. Lecture Notes. Summer 2007. 
%Available at www.mathematik.uni-muenchen.de/$\tilde
%$sorensen/Lehre/SoSe2013/MQM2/skript.pdf.

\bibitem{Sp}
H.~{Spohn}. {On the {V}lasov hierarchy}, \emph{Math. Methods Appl. Sci.} \textbf{3}
  (1981), no.~4, 445--455.
 
\end{thebibliography}
\end{document}